\def\BibTeX{{\rm B\kern-.05em{\sc i\kern-.025em b}\kern-.08em
    T\kern-.1667em\lower.7ex\hbox{E}\kern-.125emX}}
\newtheorem{proposition}{Proposition}
\begin{document}

\title{Partner in Crime: Boosting Targeted Poisoning Attacks against Federated Learning}

% \author{Shihua Sun,
% Shridatt Sugrim,
% Angelos Stavrou,
% Haining Wang
% }

\author{\IEEEauthorblockN{Shihua Sun,
Shridatt Sugrim,
Angelos Stavrou,
Haining Wang} 
\thanks{Shihua Sun, Angelos Stavrou, and Haining Wang are with the Department of Electrical and Computer Engineering, Virginia Tech, Arlington, VA 22203 USA (e-mail: shihuas@vt.edu; angelos@vt.edu; hnw@vt.edu).}
\thanks{Shridatt Sugrim is with Kryptowire Labs, McLean, VA 22102 USA (e-mail: ssugrim@kryptowire.com).}
}

% % The paper headers
% \markboth{Journal of \LaTeX\ Class Files,~Vol.~14, No.~8, August~2021}%
% {Shell \MakeLowercase{\textit{et al.}}: A Sample Article Using IEEEtran.cls for IEEE Journals}

% \IEEEpubid{0000--0000/00\$00.00~\copyright~2021 IEEE}
% Remember, if you use this you must call \IEEEpubidadjcol in the second
% column for its text to clear the IEEEpubid mark.

\maketitle

\begin{abstract}
Federated Learning (FL) exposes vulnerabilities to targeted poisoning attacks that aim to cause misclassification specifically from the source class to the target class. However, using well-established defense frameworks, the poisoning impact of these attacks can be greatly mitigated. We introduce a generalized pre-training stage approach to {\it Bo}ost {\it T}argeted {\it P}oisoning {\it A}ttacks against FL, called BoTPA.
Its design rationale is to leverage the model update contributions of all data points, including ones outside of the source and target classes, to construct an Amplifier set, in which we falsify the data labels before the FL training process, as a means to boost attacks. 
We comprehensively evaluate the effectiveness and compatibility of BoTPA on various targeted poisoning attacks. Under data poisoning attacks, our evaluations reveal that BoTPA can achieve a median Relative Increase in Attack Success Rate (RI-ASR) between 15.3\% and 36.9\% across all possible source-target class combinations, with varying percentages of malicious clients, compared to its baseline.
In the context of model poisoning, BoTPA attains RI-ASRs ranging from 13.3\% to 94.7\% in the presence of the Krum and Multi-Krum defenses, from 2.6\% to 49.2\% under the Median defense, and from 2.9\% to 63.5\% under the Flame defense.

\end{abstract}

\begin{IEEEkeywords}
Federated Learning, Targeted poisoning attack, Data poisoning, Model poisoning
\end{IEEEkeywords}

\section{Introduction}
%-------------------------------------------------------------------------------

Federated Learning (FL)~\cite{fl_2,fl_1} has recently emerged as a scalable approach to generate Machine Learning (ML) models without the need to share privately owned data. Indeed, with FL, data can be kept in disjoint local sets, inherently offering better privacy for sensitive information. This feature makes FL a promising framework for a multitude of applications with dense device deployments at the edge, such as resource allocation for low-latency vehicle-to-vehicle (V2V) communications~\cite{FL_V2V}, mobile keyboard prediction by Google~\cite{FL_keyboard}, intrusion detection in IoT networks~\cite{fl_IDS,fedmade} and patient similarity analysis in healthcare~\cite{FL_healthcare,patient_hashing}.  
However, despite its performance scalability and intrinsic privacy benefits, FL remains susceptible to numerous privacy~\cite{model_inversion_attack,unintended_leakage,mermbership_FL,scale-MIA,gradient_leakage,privacy_fl} and security attacks~\cite{adv_len,local_poison,FL_datapoisoning}. This is primarily due to the distributed nature of FL, which increases the exposed attack surfaces through malicious FL participants. 

Conceptually, poisoning attacks attempt to manipulate the dataset or the model update process for misclassifications, eroding the trustworthiness of FL. Poisoning attacks can be broadly classified into targeted~\cite{FL_backdoor1,adv_len,FL_datapoisoning,FL_backdoor2} and untargeted poisoning attacks~\cite{untargeted_poisoning_1,untargeted_poisoning_2,MPAF,untargeted_poisoning_3}. The former seeks to deliberately misclassify data points from specific source classes to target classes, while the latter aims to deteriorate the overall performance of the ML model. Both attack variants can be accomplished through data poisoning~\cite{FL_datapoisoning,data_poison} or model poisoning~\cite{adv_len,MPAF,local_poison,optimizing_model_poisoning} techniques. In data poisoning attacks, adversaries can only manipulate the data samples and cannot directly control the model training process. In contrast, during model poisoning attacks, adversaries can have a significant impact on the model by modifying the model parameters and the loss function, for instance. 

The focus of this paper is on targeted poisoning attacks, an area that has been less explored compared to untargeted attacks in prior research. 
In practice, the strength and impact of existing poisoning attacks~\cite{adv_len,untargeted_poisoning_2,FL_datapoisoning} are limited by the number of subverted FL clients, the presence of defenses~\cite{KRUM,fltrust,Bulyan,median}, and the attack capability. In data poisoning attacks, adversaries can only affect the model updates by contaminating the local data or inserting poisonous data. This limits the ability of adversaries to affect the training process.
In model poisoning attacks, adversaries are allowed to modify the model weights and the training process. This is much more powerful when compared to data poisoning attacks. However, the divergence between malicious and benign updates can be readily detected by well-established defenses~\cite{KRUM,median}.
To evade these defenses, researchers introduced stealthy model poisoning attacks~\cite{adv_len}, which offer a trade-off between remaining undetected and improving attack success rates (ASRs). Unfortunately, being stealthy comes at a high cost, and in many cases, the ASRs of poisoning attacks can be significantly reduced~\cite{adv_len}. Furthermore, the majority of untargeted poisoning attacks are not applicable in targeted scenarios due to the distinct objectives and attack strategies involved. 

%%%%%%%%%%%%%%%
Prior research~\cite{adv_len,poison_frogs,FL_datapoisoning} has focused exclusively on the direct relationship between the source and target classes when executing targeted poisoning attacks. The most intuitive strategy~\cite{FL_datapoisoning} in these scenarios involves mislabeling data samples from the source class as the target class. However, there are multiple factors affecting the decision boundary, with data distribution playing a significant role~\cite{smote,imbalance}. Notably, an important observation is that poorly separated classes in the feature space result in less distinct decision boundaries. Exploiting this insight, an attacker can deliberately manipulate the feature space to increase overlap by modifying the source class and other classes.
This raises two critical research questions: (1) how to identify the additional classes that significantly affect the decision boundary between the source and target classes, and (2) how to mislabel these classes to enhance the poisoning effect while minimizing the likelihood of detection.

%%%%%%%%%%%%%%%
% The main innovation of our work lies in indirectly altering the latent feature distributions of the source and target classes, as well as their decision boundary, by modifying the labels of intermediate classes.
%%%%%%%%%%%%%%%

In this paper, we propose a general pre-training approach to boost the ASRs of existing targeted poisoning attacks without affecting the stealthiness of original (vanilla) attacks. As a framework for \textit{Bo}osting \textit{T}argeted \textit{P}oisoning \textit{A}ttacks, BoTPA is able to breach the boundary between the source and target classes in the latent feature space. More specifically, BoTPA selects specific classes excluding the source and target classes, designated as \textbf{intermediate classes}, based on their model update contributions to the poisoned training process. An \textbf{Amplifier} set is then constructed using the existing samples from these intermediate classes. 
This design allows BoTPA to enhance the attack's effectiveness without relying solely on direct source-target modifications.
Subsequently, BoTPA intelligently crafts malicious soft labels to replace normal data labels in the Amplifier set. In contrast to one-hot encoding labels (hard labels), soft labels~\cite{pen_vis,rethink_label_smoothing} composed of fractional numbers have a higher entropy because there are more values that the label vector can take on. This increase in the density of values makes subtle modifications of the data labels much harder to detect. The total label modification mass can be more easily spread over the label components while ensuring a bound on the individual component deltas. 
To the best of our knowledge, ours is the first work to leverage intermediate classes for launching a poisoning attack.

\textbf{Contributions:} We present the design, implementation and evaluation of BoTPA against FL models. BoTPA involves two critical steps: (1) selecting intermediate classes that generate model updates similar to those of the source class during the poisoned training process and subsequently forming the Amplifier set; (2) crafting malicious soft labels for the samples within the Amplifier set. 
In the initial stage of BoTPA, the attacker trains a surrogate model using datasets from malicious clients.
%, where the source labels have been altered to the target labels. 
By collectively considering samples from each class, the attacker employs the \textit{Class Similarity from Contribution Degrees}~\cite{input_similarity} metric to identify the intermediate classes. Subsequently, the Amplifier set is formed by incorporating data samples from these intermediate classes.
Next, based on the \textit{Class Similarity from Latent Feature Distributions}, the attacker designs soft labels for the Amplifier set to enhance the local poisoning impact. 
Finally, within the malicious clients, the data labels in the Amplifier set are modified to the crafted soft labels. 

We apply BoTPA to data poisoning attacks~\cite{FL_datapoisoning} and stealthy model poisoning attacks~\cite{adv_len}. Throughout the rest of this paper, unless specified otherwise, the term ``poisoning attack'' refers to targeted poisoning attacks. Our experiments are conducted across three datasets: FMNIST, CIFAR-10, and CH-MNIST, with a range of malicious client percentages from 5\% to 30\%.
The experimental results demonstrate that BoTPA can significantly enhance the effectiveness of targeted poisoning attacks while maintaining stealthiness under the Byzantine-resilient defenses Krum~\cite{KRUM} and Median~\cite{median}, as well as the state-of-the-art defense Flame~\cite{flame}.
The summary of our main contributions is given below:
\begin{itemize} 
%[noitemsep,nolistsep]
% \begin{itemize} 
    \item We propose BoTPA, a pre-training stage boosting technique compatible with various types of targeted poisoning attacks, including both data and model poisoning attacks. 
    BoTPA requires only modifications to existing data labels on malicious clients, without the insertion of additional malicious data samples.
     
    \item We provide a theoretical analysis demonstrating that the direction of malicious model updates augmented by BoTPA aligns with that of malicious updates under vanilla attacks, thereby enhancing the targeted poisoning impact.

    \item We extensively evaluate the effectiveness and stealthiness of BoTPA in both independent and identically distributed (IID) and non-IID data distribution scenarios.

    \item We present visualizations of feature distributions in the latent space to elucidate how BoTPA alters the decision boundary between the source and target classes.

    % \item Based on the feature density divergence between intermediate and clean classes from benign and malicious updates, we design a tentative density-based defense mechanism to filter out the malicious updates.

\end{itemize}
\section{Background and Related Work}
In this section, we present the fundamentals of the FL system, followed by an exploration of how it can be compromised by poisoning attacks, as well as the defenses against them.

\subsection{FL System}
We consider an FL system comprising a global server and $K$ local clients, denoted by $\{\mathcal{C}_1, \mathcal{C}_2, \cdots, \mathcal{C}_K\}$.  In this system, the local clients collaborate to develop a well-trained model while keeping their data private from both the server and other participants. 
% The FL model is represented by a function  $f_{\boldsymbol{w}}: \boldsymbol{x} \rightarrow y $, which maps inputs to labels.
All training operations are performed locally by the clients, while the server is responsible for aggregating the local model updates.
% and broadcasting the global model to the clients.
The workflow of the FL system is outlined as follows: (1) The server initializes a global model $\boldsymbol{w}_{0}$ and broadcasts to clients. (2) Each local client computes the model update $\boldsymbol{\delta}_{t}^{(k)}$ and sends it back to the server. (3) The server applies the aggregation rule to the received updates. With FedAvg~\cite{FedAVG}, the global update is $\boldsymbol{\delta}_{t}=\sum_{k=1}^{K} \frac{n^{(k)}}{\sum_{k=1}^{K} n^{(k)}} \boldsymbol{\delta}_{t}^{(k)}$, where $n^{(k)}$ is the size of the dataset in the $k$-th client. The updated global model is then computed as $\boldsymbol{w}_{t} = \boldsymbol{w}_{t-1} + \boldsymbol{\delta}_{t}$. (4) The server broadcasts the updated global model to the clients. Steps (2) to (4) are repeated iteratively until the model converges.

\begin{figure}[t]
%\vspace{-0.1in}
\centerline{\includegraphics[width=\linewidth, trim =0.5cm 8.2cm 12.3cm 1.5cm, clip]{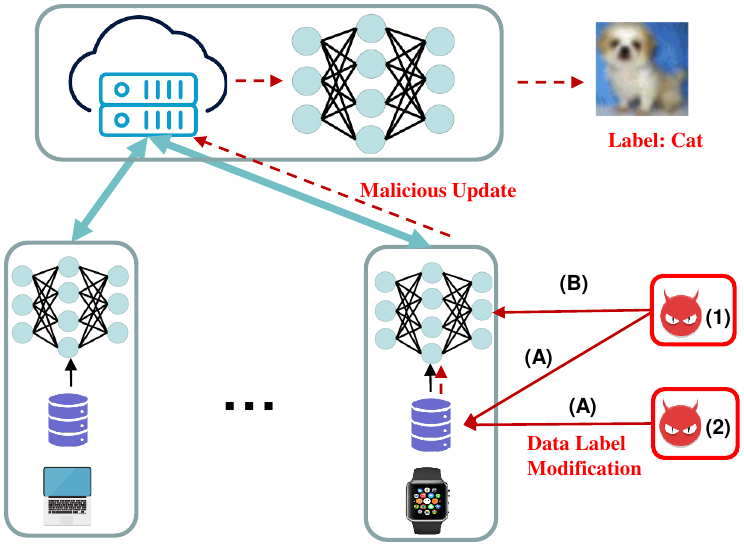}}
% % trim={<left> <lower> <right> <upper>}
\caption{The FL system under targeted data poisoning and model poisoning attacks: (A) procedure for maliciously changing labels, (B) model manipulation process. Attacker (1) is performing model poisoning, and Attacker (2) is performing data poisoning.}
\label{FL_system}
% \vspace{-0.1in}
\end{figure}

\subsection{Untargeted Poisoning Attacks in FL systems}

 Tolpegin et al.~\cite{FL_datapoisoning} introduced label-flipping attacks, which modify only training data labels. Bhagoji et al.~\cite{adv_len} proposed model poisoning attacks that employ an alternative minimization approach to maintain stealthiness. Fang et al.~\cite{local_poison} formulated the attack as an optimization problem to craft malicious updates in the opposite direction of the normal global model update without an attack. Zhang et al.~\cite{poisonGAN} presented PoisonGAN, a method based on generative adversarial networks to generate fake data samples resembling the data distribution of malicious clients. A more stealthy attack approach is backdoor attacks\cite{boba,FL_backdoor1,DP_backdoor1,FL_backdoor2}, where misclassifications are induced through triggers attached to data features. However, our boosting method focuses solely on label manipulation without feature modification. Thus, backdoor attacks with triggers are beyond the scope of this paper.

Most of the untargeted attack approaches cannot be adapted or extended to targeted poisoning attacks and, as a result, are not applicable to the problem domain we focus on. 

\subsection{Targeted Poisoning Attacks in FL systems} \label{sec:background_targeted}
Here, we outline the strategies for both targeted data poisoning~\cite{FL_datapoisoning} and targeted model poisoning~\cite{adv_len} attacks, which serve as baseline adversarial methods in this study and are collectively referred to as vanilla attacks. 
Figure~\ref{FL_system} provides an overview of the FL system under these poisoning attacks.
Specifically, attacker 1 performs model poisoning and 
 attacker 2 conducts data poisoning. Both aim to degrade the classification performance of the target model. While both attackers can manipulate the training data, the model poisoning attacker can further alter the FL model directly, leading to more severe performance degradation.
A detailed discussion of these attack strategies is given below.

\subsubsection{Targeted data poisoning attack}     
Before training, the attacker maliciously modifies the data labels from the source class to the target class~\cite{FL_datapoisoning}. The attacker is not allowed to directly alter the model weights or the loss function.

\subsubsection{Targeted model poisoning attacks}
Two types of targeted model poisoning attacks are introduced below. It is important to note that both attacks begin by altering the source labels to match the target labels.
(1) Model poisoning attack by explicitly boosting updates~\cite{adv_len}: Instead of sending the normal updates $\delta^{(k)}$ back, malicious clients apply a boosting factor $\lambda$ to them. This approach aims to counteract the downscaling effect caused by the FedAvg aggregation method by reporting $\lambda \delta^{(k)}$. 
(2) Stealthy model poisoning attack by alternating minimization~\cite{adv_len}: This strategy aims to improve the attack's stealthiness by considering two detection metrics—accuracy on validation data and weight update statistics. Accordingly, two specific loss terms are incorporated into the poisoning objective function: training loss over the benign dataset and the distance between malicious and benign updates.

In addition to the attacks mentioned above, there are other existing targeted attacks for centralized ML models or FL systems. However, these methods are not appropriate for baseline comparisons with our boosting method due to different attack assumptions and settings.
Shafahi et al.~\cite{poison_frogs} introduced an optimization-based method for crafting poisoned samples, which are subsequently injected into the training dataset. In contrast, our boosting mechanism only requires label manipulation without data injection, making it a weaker assumption for the adversary.
 Jagielski et al.~\cite{subpopulation} designed a subpopulation selection method and then generated attacks using a subset of the training data. However, their approach assumes access to large and diverse datasets, which does not align with the distributed setting of FL systems. Wang et al.~\cite{yes_backdoor} focused on attacking low-confidence ``edge cases'', which are hard to detect. However, our objective is to provide a general boosting method applicable to all source-target combinations.

\subsection{Defenses} \label{sec:denfense}
With FedAvg, an FL system can be easily corrupted by targeted poisoning attacks. Thus, a variety of defenses have been proposed to defend against poisoning attacks.

The Krum and Multi-Krum~\cite{KRUM} defenses select either one or multiple updates with the smallest distances to other updates as benign updates for the next communication round. The Median~\cite{median} defense computes the coordinate-wise median of local updates as the new update.
The Trimmed Mean~\cite{median} defense averages the remaining part of updates after removing the largest and smallest fractions. Bulyan\cite{Bulyan} combines Krum with Trimmed Mean. 
%%%%%%%%
The aforementioned defenses are Byzantine-resilient and straightforward to implement, as they are either non-parametric or require minimal parameter tuning. Advanced poisoning attacks, such as those presented in~\cite{adv_len}, have demonstrated the ability to compromise these defenses by manipulating model weights. However, to circumvent Byzantine-resilient mechanisms, these attacks require a trade-off that significantly reduces their success rates.
%%%
Contra~\cite{contra} identifies suspicious models that display higher similarities with other local models, and then adaptively reduces their learning rates and likelihood of being selected for the next training round.
FLtrust\cite{fltrust} first computes the cosine similarity between the global update and local updates as the trust scores, and then uses the average of normalized local updates weighted by trust scores as the global update. Flare\cite{flare} designs the trust score based on the distance of penultimate layer parameters.
FLDetector~\cite{FLdetector} designs a malicious client detection method based on update consistency, which is implemented prior to secure aggregations.
%%%
Contra, FLTrust, FLARE, and FLDetector are effective in defense across various attack scenarios. However, these methods rely on different assumptions about the FL system and often require complex parameter tuning. For example, Contra heavily depends on detecting alignment (cosine similarity) among the updates from malicious clients. If the malicious local models exhibit diverse patterns, the effectiveness of this approach may be significantly reduced.
%%%
% Moreover, FedRecover~\cite{FedRecover} can restore the poisoned FL model to an accurate state using historical information.

In addition, recent studies have demonstrated that differential privacy (DP)~\cite{DP_backdoor2,DP_backdoor1,uncovering}, implemented through clipping and adding noise to local model updates, can effectively enhance the FL model's robustness against poisoning attacks.
Among these studies, Flame~\cite{flame} stands out as a particularly noteworthy approach. It constrains the amount of injected DP noise by employing three defense phases: filtering deviated model updates, clipping scaled-up updates, and finally adding suitable noise to the updates.
In this paper, we include Flame as one of the baseline defense methods, as it has proven effective in mitigating the impact of poisoning attacks while preserving the model's performance on the main task.

% While the above mentioned defenses have been demonstrated to defend untargeted poisoning attacks, they fail to provide a comprehensive stealthiness evaluation for targeted attacks, especially when considering the attack difficulty divergence for different source-target combinations.

% In this paper, our focus lies in evaluating BoTPA's performance on attacks under the mathematical-proof byzantine-resilient aggregations, such as Krum, Multi-Krum, and Median.

\subsection{Design and Application of Soft Labels}
Traditionally, one-hot encoding, or hard label, has been widely used for training neural networks in classification tasks. However, hard labels fail to capture the shared underlying features across different classes. To address this problem, existing work has explored soft labels~\cite{revisit_soft, rethink_label_smoothing, soft_contrastive, soft_rethink_distill}, with a focus on enhancing the generalization and transferability of ML models by smoothing their decision boundaries.
 %%%%
Two prominent methods are commonly used to generate soft labels. The first is label smoothing\cite{rethink_label_smoothing}, which mixes hard labels with a uniform distribution over all classes. The second approach is knowledge distillation~\cite{distill_knowledge}, where the knowledge from a larger and pre-trained model (teacher model) is transferred to a smaller model (student model). During this process, the output probabilities of the teacher model serve as soft labels for the student model, preserving the relative information between classes. 

In this work, the attacker maliciously uses soft labels to mislabel intermediate classes, strategically influencing the decision boundary between the source and target classes in the latent feature space. This approach facilitates misclassification while preserving the stealthiness of the poisoning attacks.

% In this work, we leverage soft labels to affect the decision boundary between the source and target classes in the latent feature space, facilitating misclassification without compromising the stealthiness of poisoning attacks.

\section{Threat Model}
%-------------------------------------------------------------------------------
% \noindent
% \textcolor{red}{
% \textbf{Target FL System.}
% }
\noindent
\textbf{Adversarial Goal.} The adversary's goal is to manipulate the FL model to cause samples from a specific source class, $c_\text{src}$,  to be misclassified as a designated target class, $c_\text{tgt}$.
Given a classification model $f_{\boldsymbol{w}}$, the attacker seeks to maximize the probability of incorrect classification:
\begin{equation}
    \max_{\boldsymbol{w}} \mathbb{P} \left( f_{\boldsymbol{w}}(\boldsymbol{x}) = c_\text{tgt} \mid y = c_\text{src} \right),
\end{equation}
where $\boldsymbol{x}$ represents an input sample from the source class $c_\text{src}$ and $y$ denotes its true label.

\vspace{0.2in}
\noindent 
\textbf{Adversarial Capabilities.} We assume that only the modification of existing data labels on malicious clients is required, without injecting additional malicious data samples or making extra modifications to model weights. BoTPA can be applied to both data poisoning and model poisoning attacks without disturbing their original workflow.
Additionally, backdoor attacks involving the attachment of triggers to images are not within the scope of this study, as we do not modify data features. 
In terms of the percentage of compromised clients, we assume that the attacker can compromise fewer than 33\% of the clients, adhering to the Byzantine fault tolerance model, which better represents real-world adversarial conditions. The more details on the target FL system and adversarial settings are given in  Section~\ref{sec:exp_setup}.

\noindent
\textbf{Defenses.} 
We assume that established defenses against poisoning attacks, such as Byzantine-resilient aggregation techniques and DP-based mechanisms, are in place. However, this study does not consider adaptive defenses and we assume that the defender has no prior knowledge of the details of the proposed attack.

\section{BoTPA Framework} \label{sec:strategy}
%\section{Strategies for Boosting Targeted Poisoning Attacks} \label{sec:strategy}
In this section, we start by presenting the design overview of the BoTPA framework, including the high-level step-by-step execution. Next, we explain in detail the construction of the Amplifier set and then the design of soft labels.  Finally, we provide a model analysis to establish a theoretical understanding of the BoTPA framework.
% Finally, we provide a model analysis to establish a theoretical understanding of the BoTPA framework.

% BoTPA overview
\begin{figure}[t] 
\centerline{\includegraphics[width=1\linewidth, trim =1.4cm 9.2cm 8.7cm 0cm, clip]{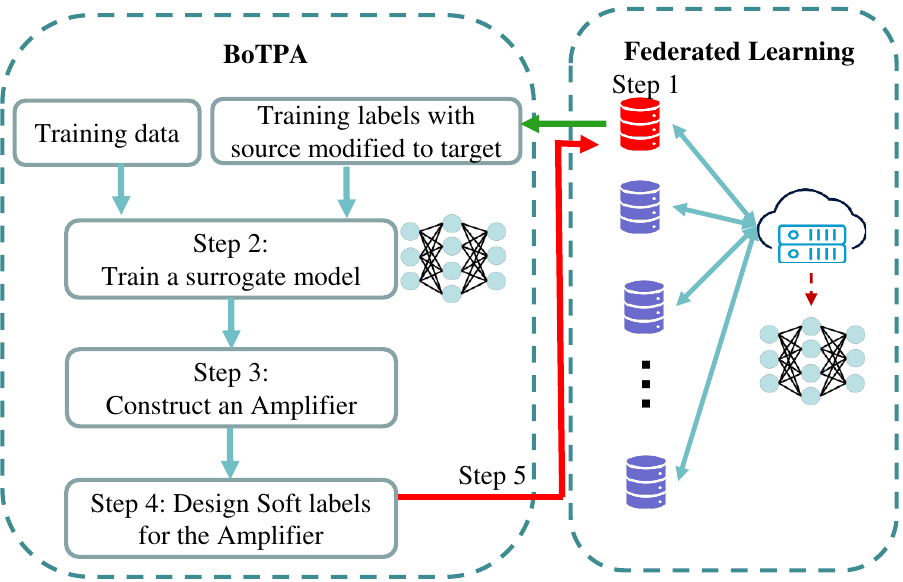}}
% % trim={<left> <lower> <right> <upper>}
\caption{Overview of BoTPA. The right box represents the poisoned FL system, and the left box displays the BoTPA procedures.
%The left box shows the procedures of training the surrogate model, constructing the Amplifier set, and designing soft labels. Eventually, the maliciously crafted soft labels are transmitted to malicious clients to poison the FL system.
}
\label{fig:Diagram}
% \vspace{-0.1in}
\end{figure}

\subsection{System Design Overview}
During both data poisoning and model poisoning attacks,
the attacker compromises a small subset of clients in the FL system to manipulate the training dataset or the model. 
%Malicious information flows from these local clients to the server during the training phase, ultimately causing the model to make incorrect predictions.
Although the capabilities of attackers differ between data poisoning and model poisoning attacks, both types of attacks begin by modifying data labels. In these vanilla attacks, only the data labels of the source class are modified, without considering the indirect impact of other classes regarding targeted misclassification.
To fill this gap, we propose the boosting method, BoTPA, which focuses on label manipulation in the pre-training stage.  
 The BoTPA overview is illustrated in Figure~\ref{fig:Diagram}, with the steps outlined below:
\begin{itemize}
    \item {Step 1:} In malicious clients, the attacker changes the data sample labels from the source class to the target class.
    \item {Step 2:} The attacker trains a centralized surrogate model using the contaminated local datasets.
    \item {Step 3:} The Amplifier set is constructed using selected intermediate classes that produce model updates similar to those of the source class, as detailed in Algorithm~\ref{alg:select}.
    \item {Step 4:} {Soft labels are generated for data samples in the Amplifier set} based on input similarity derived from latent feature distributions, as detailed in Algorithm~\ref{alg:design}.
    \item {Step 5:} In the datasets of malicious clients, the labels of samples in the Amplifier set are changed to the crafted soft labels.
\end{itemize}
In practice, the attacker designs the surrogate model based on its knowledge of the FL model architecture: (1) with access, the surrogate model has the same architecture as the FL model 
(see Sections~\ref{boost_data_poison} to~\ref{boost_noniid})
% (see Sections~\ref{boost_data_poison}, \ref{boost_explicit}, and~\ref{boost_stealthy})
; (2) without access, the attacker customizes surrogate models with different architectures (see Section~\ref{sec:different_arch}). Next, Section \ref{sec:choose} and  Section \ref{sec:design_soft_labels} will explain Step 3 and Step 4 in more detail, respectively.

\begin{algorithm}[t]
\caption{Construction of Amplifier Set}\label{alg:select}
\begin{flushleft}
    \textbf{Input:} the dataset $\{S_1, S_2, \cdots, S_{N_C}\}$ in malicious clients; the pretrained surrogate model with weights $\boldsymbol{w}$; the predefined number of intermediate classes $N$; source class $c_{\text{src}}$; target class $c_{\text{tgt}}$. \\ 
    % the ratio of samples brought from intermediate classes $\gamma$.  \\
    \textbf{Output:} the Amplifier set.
 \end{flushleft}
\begin{algorithmic}[1]

\For {$c_i \in [N_c] \setminus [c_\text{src}, c_{\text{tgt}}]$}  
    \For {$\boldsymbol{x}$ in $S_{c_i}$}
        \For {$\boldsymbol{x}^{\prime}$ in $S_{c_\text{src}}$}
        \State $IS_{\text{contrib}}(\boldsymbol{x}, \boldsymbol{x}^{\prime}) \gets \frac{\nabla_{\boldsymbol{w}} f_{\boldsymbol{w}}\left(\boldsymbol{x}\right)}{\left\| \nabla_{\boldsymbol{w}} f_{\boldsymbol{w}}\left(\boldsymbol{x}\right)\right\|}  \frac{\nabla_{\boldsymbol{w}} f_{\boldsymbol{w}}\left(\boldsymbol{x}^{\prime}\right)}{\left\| \nabla_{\boldsymbol{w}} f_{\boldsymbol{w}}\left(\boldsymbol{x}^{\prime}\right) \right\|}$
        \EndFor
    \EndFor
    \State Compute $CS_{\text{contrib}}(c_\text{src}, c_i)$ according to Eq.(\ref{eq:CS_contrib}).
\EndFor
\State $C \gets$ first $N$ classes in argsort$(\{CS_{\text{contrib}}(c_\text{src}, c_{i }\in [N_c] \setminus [c_\text{src}, c_{\text{tgt}}]\})$. 
% \State $S_{c_i}^{\prime} \gets \operatorname{RandomSample~}\left(S_{c_i}, \gamma \left|S_{c_i}\right|\right) \text{ for } c_i \in C$
% \State $\text{Amplifier set}=\bigcup_{c_i \in C} S_{c_i}^{\prime}$ \\
\State $\text{Amplifier set}=\bigcup_{c_i \in C} S_{c_i}$ \\
\Return{\text{Amplifier} set}
\end{algorithmic}
\end{algorithm}

\subsection{Construction of Amplifier Set} \label{sec:choose}
To boost targeted misclassification,  we define a metric named ``Input Similarity from Contribution Degree'' to identify the classes that produce model updates similar to those of the source class during model training. Then, we can group data samples from these classes to form the Amplifier set. By mislabeling the Amplifier set, the attacker is able to intensify the malicious impact on the source data. 

\textbf{Input Similarity from Contribution Degrees.} For neural networks, two visually similar training images may have a very distinct influence on the training process, including the directions and magnitudes of model updates. We refer to the impact of input samples on the model updates as their contribution degrees to the training process.
In order to evaluate these contribution degrees, the input similarity, $IS_\text{contrib}$, is defined based on the output difference upon a model update~\cite{input_similarity}.
%%%%%
Considering two inputs, $\boldsymbol{x}$ and $\boldsymbol{x}^\prime$, and the model weights $\boldsymbol{w}$. The metric $IS_\text{contrib}$ is derived through the following steps: (1) slightly modifying the output value for $\boldsymbol{x}$ by $\varepsilon$ through the addition of a minor update $\delta \boldsymbol{w}$ to the model weights; (2) observing the extent to which the output for $\boldsymbol{x}^\prime$ changes due to $\delta \boldsymbol{w}$; (3) using the difference in output changes as $IS_\text{contrib}$. In step (2), the change in the output for $\boldsymbol{x}^\prime$ is given by $\varepsilon \frac{\nabla_{\boldsymbol{w}} f_{\boldsymbol{w}}\left(\boldsymbol{x}^{\prime}\right) \cdot \nabla_{\boldsymbol{w}} f_{\boldsymbol{w}}(\boldsymbol{x})}{\left\|\nabla_{\boldsymbol{w}} f_{\boldsymbol{w}}(\boldsymbol{x})\right\|^2}$, where the kernel $\frac{\nabla_{\boldsymbol{w}} f_{\boldsymbol{w}}\left(\boldsymbol{x}^{\prime}\right) \cdot \nabla_{\boldsymbol{w}} f_{\boldsymbol{w}}(\boldsymbol{x})}{\left\|\nabla_{\boldsymbol{w}} f_{\boldsymbol{w}}(\boldsymbol{x})\right\|^2}$ represents the influence of $\boldsymbol{x}$ over $\boldsymbol{x}^{\prime}$.\footnote{For more details on the derivation, please refer to~\cite{input_similarity}.} 
Thus, the symmetric normalized input similarity between $\boldsymbol{x}$ and $\boldsymbol{x}^{\prime}$ from contribution degrees is defined as
\begin{equation}\label{eq:IS_contrib}
    IS_{\text{contrib}}(\boldsymbol{x}, \boldsymbol{x}^{\prime}) = 
    \frac{\nabla_{\boldsymbol{w}} f_{\boldsymbol{w}}\left(\boldsymbol{x}\right)}{\left\| \nabla_{\boldsymbol{w}} f_{\boldsymbol{w}}\left(\boldsymbol{x}\right)\right\|} \cdot \frac{\nabla_{\boldsymbol{w}} f_{\boldsymbol{w}}\left(\boldsymbol{x}^{\prime}\right)}{\left\| \nabla_{\boldsymbol{w}} f_{\boldsymbol{w}}\left(\boldsymbol{x}^{\prime}\right) \right\|}. 
\end{equation}
Inputs with higher values of $IS_\text{contrib}(\boldsymbol{x}, \boldsymbol{x}^{\prime})$ are considered to induce more similar changes to the model updates.
Considering that we use the cross-entropy loss function during the training process, $f_{\boldsymbol{w}}\left(\boldsymbol{x}\right)$ is represented as  $- \log f_{\boldsymbol{w}}(\boldsymbol{x})$ in the calculations of $IS_\text{contrib}$.

\textbf{Construction of Amplifier Set.} In Step 2 of the BoTPA overview, a surrogate model is trained based on the local datasets with the modified source labels. The weights of this model are used to calculate $IS_\text{contrib}$ between individual data samples. The similarities of data samples from different classes are then averaged to obtain the class similarity. Let the data accessible to the attacker be denoted as $\{S_1, S_2, \cdots, S_{N_C}\}$, where $S_i$ represents the set of samples from the $i$-th class, and $N_C$ is the number of classes. With the surrogate model $\boldsymbol{w}$, the class similarity between classes $c_1$ and $c_2$ in terms of contribution degrees is represented as 
\begin{equation}
    CS_{\text{contrib}}(c_1, c_2) = \frac{1}{n_{c_1} n_{c_2}} \sum^{n_{c_1}}_{i=1}\sum^{n_{c_2}}_{j=1} IS_{\text{contrib}}(S_{c_1,i}, S_{c_2,j}), \label{eq:CS_contrib}
\end{equation}
where $S_{c_1,i}$ is the $i$-th sample in $S_{c_1}$, $S_{c_2,j}$ is the $j$-th sample in $S_{c_2}$, and $n_{c_1}$ and $n_{c_2}$ are the sizes of classes ${c_1}$ and ${c_2}$, respectively.

We calculate $CS_{\text{contrib}}$ between the source class $c_{src}$ and all other classes, excluding the target class $c_{tgt}$. Classes with the highest $CS_{\text{contrib}}$ scores are designated as intermediate classes and collectively denoted as set $C$. 
The Amplifier set is constructed by obtaining existing data samples from each intermediate class, and it is presented as 
\begin{equation}
\text{Amplifier set}=\bigcup_{c_i \in C} S_{c_i}.
\end{equation}
We utilize the surrogate model weights from the middle epoch of training, prior to convergence, to select intermediate classes. This is because $CS_{\text{contrib}}$ relies on the derivative of the loss function with respect to model weights, which becomes negligible after model convergence. Additionally, this approach reduces the impact of random model weight initialization on the selection process, ensuring the stability of the selected intermediate classes. The effectiveness of this intermediate class selection approach will be theoretically analyzed based on weight divergence in Section~\ref{subsec:theory}.

It is worth noting that we construct the Amplifier set using intermediate classes rather than individual data samples due to the inherent uncertainty in ML training processes~\cite{uncertainty}. Utilizing intermediate classes helps stabilize the similarity scores between individual images from different classes.

\begin{algorithm}[t]
\caption{Soft Label Design for the Amplifier Set}\label{alg:design}
\begin{flushleft}
    \textbf{Input:} the dataset $\{S_1, S_2, \cdots, S_{N_C}\}$ in malicious clients; the converged surrogate model with weights $\boldsymbol{w_\textit{conv}}$; the set of intermediate classes $C$; source class $c_{\text{src}}$; target class $c_{\text{tgt}}.$  \\
    \textbf{Output:} the soft labels for intermediate classes.
 \end{flushleft}
\begin{algorithmic}[1]

\For {$c_i \in C$}  \
    \For {$\boldsymbol{x}$ in $S_{c_i}$}
        \For {$\boldsymbol{x}^{\prime}$ in $S_{c_\text{src}}$}
        \State $    IS_{\text{ftrs}}(\boldsymbol{x}, \boldsymbol{x}^{\prime}) \gets 
    \frac{L_{\boldsymbol{w_\textit{conv}}}(\boldsymbol{x}) \cdot L_{\boldsymbol{w_\textit{conv}}}(\boldsymbol{x}^{\prime})}{\left\|L_{\boldsymbol{w_\textit{conv}}}(\boldsymbol{x}) \right\| \left\|L_{\boldsymbol{w_\textit{conv}}}(\boldsymbol{x}^{\prime}) \right\|}.$
        % \State Compute $IS_{\text{ftrs}}(\boldsymbol{x}, \boldsymbol{x}^{\prime})$ according to Eq.(\ref{eq:IS_ftrs}).
        \EndFor
    \EndFor
    \State Compute $CS_{\text{ftrs}}(c_\text{src}, c_i)$ according to Eq.(\ref{eq:CS_ftrs}).
    \If{ $CS_{\text{ftrs}}(c_\text{src}, c_i) > 0$}
        \State $\boldsymbol{s}_{c_i} \gets CS_{\text{ftrs}}(c_\text{src}, c_i)*\boldsymbol{e}_{c_{\text{tgt}}} + (1-CS_{\text{ftrs}}(c_\text{src}, c_i)) * \boldsymbol{e}_{c_i}.$
    \Else
        \State $\boldsymbol{s}_{c_i} \gets \boldsymbol{e}_{c_i}.$
    \EndIf
\EndFor\\
\Return{$\{s_{c_i \in C}\}$}

\end{algorithmic}
\end{algorithm}

\subsection{Soft Label Design for Amplifier Set} \label{sec:design_soft_labels}
When enhancing the targeted poisoning effect, it is crucial to minimize the impact on the classification performance of data samples from intermediate classes. To achieve this, we adopt soft labels for intermediate classes, making subtle changes that still effectively boost targeted poisoning attacks. Moreover, when training neural networks, hard labels assume that data from different classes are completely separated and distinct.
However, classes within the same dataset always share similar features, such as trucks and automobiles in CIFAR-10\cite{cifar10}. Thus, we define a metric to measure the similarity in knowledge learned by the ML model for different inputs. This metric is subsequently used for designing soft labels.

\textbf{Input Similarity from Latent Feature Distributions.} 
The input similarity, $IS_{\text{ftrs}}$, is based on feature distributions derived from the outputs of the logits layer.
The logits layer, being the last hidden layer before the activation layer, has been demonstrated to yield effective feature representations~\cite{logit}. 
Let $L_{\boldsymbol{w}}: \boldsymbol{x} \rightarrow L_{\boldsymbol{w}}(\boldsymbol{x})$ denote the mapping from the input layer to the logits layer. The similarity between the feature distributions of two data samples, $\boldsymbol{x}$ and $\boldsymbol{x}^\prime$, is represented as the cosine similarity between their logits layer representations. This is mathematically expressed as 
\begin{equation}
    IS_{\text{ftrs}}(\boldsymbol{x}, \boldsymbol{x}^{\prime}) = 
    \frac{L_{\boldsymbol{w}}(\boldsymbol{x}) \cdot L_{\boldsymbol{w}}(\boldsymbol{x}^{\prime})}{\left\|L_{\boldsymbol{w}}(\boldsymbol{x}) \right\| \left\|L_{\boldsymbol{w}}(\boldsymbol{x}^{\prime}) \right\|}, \label{eq:IS_ftrs}
\end{equation}
where $IS_{\text{ftrs}}(\boldsymbol{x}, \boldsymbol{x}^{\prime})$ is bounded by $[-1,1]$. 
%The output from the logits layer is fed into the activation layer, where softmax is used for classification to generate the probability vectors. 

\textbf{Soft Label Design.} To design appropriate soft labels for each class, it is essential to calculate the similarity between classes.
%by assessing the similarity of individual data samples. 
 Let the data available to the attacker be denoted as $\{S_1, S_2, \cdots, S_{N_C}\}$, and the converged surrogate model as $\boldsymbol{w}_\textit{conv}$. The class similarity between two classes, $c_1$ and $c_2$, based on feature distributions, can be represented as
\begin{equation}
    CS_{\text{ftrs}}(c_1, c_2) = \frac{1}{n_{c_1} n_{c_2}} \sum^{n_{c_1}}_{i=1}\sum^{n_{c_2}}_{j=1} IS_{\text{ftrs}}(S_{c_1,i}, S_{c_2,j}). \label{eq:CS_ftrs}
\end{equation}

Instead of using binary values (0s and 1s), soft labels employ fractional values to capture the nuanced relationships between different data classes. From Equation (\ref{eq:CS_ftrs}), we can derive the class similarity between the source class and each intermediate class. Intuitively, to achieve the boosting objective, the portion of features in intermediate classes that are shared with the source class is maliciously labeled as the target, while the remaining features are correctly labeled.
 This approach is inspired by the label smoothing method~\cite{rethink_label_smoothing}, which involves using a uniform distribution over all classes to design soft labels for regularization. 
%While we have adopted this method for our current study, other soft label design approaches could also be explored.
 For example, consider an intermediate class with class ID $c_z$. The original hard label for $c_z$ is $\boldsymbol{e}_{c_z} = [0, 0, \cdots, 1, \cdots 0]^{T}$, where the $c_z$-th element is 1 and all the other elements are 0s. Here, $\boldsymbol{e}_\text{src}$ represents the source hard label, and $\boldsymbol{e}_\text{tgt}$ represents the target hard label. 
The new label for $c_z$ is denoted as
\begin{equation}
s_{c_z} =
\begin{cases} 
e_{c_z}, & \text{if } CS_{\text{ftrs}}(c_\text{src}, c_z) < 0, \\
\text{Comb}(c_{src}, c_z), & \text{otherwise},
\end{cases}
\end{equation}
%%%
where $\text{Comb}(c_\text{src}, c_z) = CS_{\text{ftrs}}(c_\text{src}, c_z) \cdot e_{c_\text{tgt}} + \big(1 - CS_{ftrs}(c_\text{src}, c_z)\big) \cdot e_{c_z}.$
This piecewise function is defined such that if $CS_{\text{ftrs}}(c_\text{src}, c_z)$ is smaller than 0, $c_z$ retains its original hard label. Otherwise, the label for $c_z$  is replaced with a soft label computed using the convex combination. In the final step of BoTPA, all data samples in the Amplifier set are modified to the crafted labels.

% The procedure of soft label design is described in Algorithm~\ref{alg:design}. 

% \noindent where \( \text{Comb}(c_{src}, c_z) \) is defined as:

\subsection{Theoretical Analysis} \label{subsec:theory}

\begin{figure}[tbp]
\centerline{\includegraphics[width=\linewidth, trim =0cm 12.6cm 13.0cm 0cm, clip]{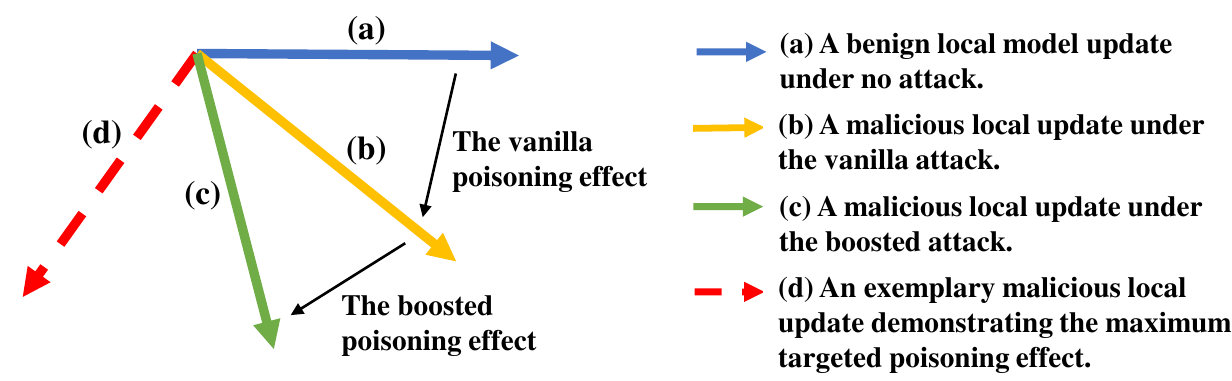}}
% trim={<left> <lower> <right> <upper>}
\caption{Illustration of local model updates under no attack, vanilla poisoning attack, and boosted poisoning attack.}
\label{fig:update_direction}
% \vspace{-0.1in}
\end{figure}

To gain a deeper understanding of the boosting technique, we analyze the FL model under boosted data poisoning attacks, which can be conceptualized in two stages: (1) transitioning from no attack to a vanilla attack, and (2) transitioning from a vanilla attack to a boosted attack. 
While the actual attack does not proceed in these distinct stages, this abstraction facilitates a more straightforward analysis.
Figure~\ref{fig:update_direction} illustrates the local model updates, with step (a) to step (b) representing the first stage, and step (b) to step (c) representing the second stage.
Next, we provide a mathematical analysis based on the concept of weight divergence as described in~\cite{nonIID} during these two stages. The comparison of weight divergence demonstrates the impact of BoTPA on the model training process.

% \textbf{(1) No attack to vanilla attack.}  
\subsubsection{Transition from no attack to vanilla attack}  

Supposing the cross-entropy loss is used for training the FL model, the loss function for the $k$-th benign client is
\begin{equation}  \label{eq:loss_fl}
\begin{aligned}
\ell(\boldsymbol{w}^{(k)})&
=-\sum_{i=1}^{N_C} p^{(k)}(y=i) \mathbb{E}_{\boldsymbol{x} \mid y=i}\left[\log f_{i}(\boldsymbol{x}, \boldsymbol{w}^{(k)})\right] ,
\end{aligned}
\end{equation}
where $f_{i}(\boldsymbol{x}, \boldsymbol{w})$ denotes the probability of classifying the input $x$ into class $i$, and $p^{(k)}(y=i)$ is the data distribution of class $i$. In contrast, if the $k$-th client is subjected to a data poisoning attack seeking to induce misclassfication from source class $s$ to target class $r$, the loss function is 
% \begin{equation} \label{eq:loss_attack}
% \begin{aligned}
% \ell ^{\prime} (\boldsymbol{w}^{(k)})&
% =-\sum_{i=1, i\neq s}^{N_C} p^{(k)}(y=i) \mathbb{E}_{\boldsymbol{x} \mid y=i}\left[\log f_{i}(\boldsymbol{x}\mid y=i, \boldsymbol{w}^{(k)})\right] \\
% & - p^{(k)}(y=s) \mathbb{E}_{\boldsymbol{x} \mid y=s}\left[\log f_{r}(\boldsymbol{x}\mid y=s, \boldsymbol{w}^{(k)})\right],
% \end{aligned}
% \end{equation}
\begin{equation} \label{eq:loss_attack}
\begin{aligned}
\ell ^{\prime} (\boldsymbol{w}^{(k)})&
=\sum_{i=1, i\neq s}^{N_C}  \kappa_{i}(\boldsymbol{x}\mid y=i,  \boldsymbol{w}^{(k)})  
+ \kappa_{r}(\boldsymbol{x}\mid y=s,  \boldsymbol{w}^{(k)}),
\end{aligned}
\end{equation}
where the second term corresponds to the loss resulting from the misclassification and $\kappa(\cdot)$ is 
$$\kappa_{i}(\boldsymbol{x}\mid y=j,  \boldsymbol{w}) = - p^{(k)}(y=j) \mathbb{E}_{\boldsymbol{x} \mid y=j}\left[\log f_{i}(\boldsymbol{x}, \boldsymbol{w})\right].$$

% \textbf{Proposition 1.}
\begin{proposition}
Suppose the FL model is updated using the gradient descent algorithm by minimizing the cross-entropy loss function. 
Let $m$ denote the number of local training iterations in each communication round.
If a targeted data poisoning attack, attempting to classify source class $s$ as target class $r$, occurs after $T$ communication rounds, the local weight divergence at time $t = mT + 1$ between the scenarios with and without the vanilla attack is given by
\begin{equation} \label{eq:6}
\begin{aligned}
    \Delta^{(k)}_{mT+1} &  =
    \boldsymbol{w}_{mT+1}^{\prime (k)} - \boldsymbol{w}_{mT+1}^{(k)} \\
    &= \eta p^{(k)}(y=s)  \mathbb{E}_{\boldsymbol{x} \mid y=s} [ \delta_{rs}(\boldsymbol{x}, \boldsymbol{w}^{ (k)}_{mT}) ],
\end{aligned}
\end{equation}
where $\delta_{rs}(\boldsymbol{x}, \boldsymbol{w}) =  \nabla_{\boldsymbol{w}} \log f_{r}(\boldsymbol{x}, \boldsymbol{w}) - \nabla_{\boldsymbol{w}} \log f_{s}(\boldsymbol{x}, \boldsymbol{w})  $, $\boldsymbol{w}^{\prime (k)}$ is the model weights under the vanilla attack in the $k$-th client, and $\eta$ is the learning rate. 
%(See more details of Proposition 1 in Appendix \ref{sec:proof_1}.)
\end{proposition}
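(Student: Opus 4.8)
The plan is to exploit the fact that the attack only alters the loss function beginning at communication round $T$, so the two weight trajectories coincide exactly through iteration $mT$, and the entire divergence at $t=mT+1$ comes from a \emph{single} gradient-descent step taken from the common weight $\boldsymbol{w}_{mT}^{(k)}$ but with two different objectives. First I would record that, since no attack occurs during the first $T$ rounds, the clients start round $T+1$ from identical weights, $\boldsymbol{w}_{mT}^{\prime (k)} = \boldsymbol{w}_{mT}^{(k)}$. Writing one gradient step for each scenario and subtracting then gives
\begin{equation*}
\Delta_{mT+1}^{(k)} = \boldsymbol{w}_{mT+1}^{\prime (k)} - \boldsymbol{w}_{mT+1}^{(k)} = -\eta\bigl[\nabla_{\boldsymbol{w}}\ell^{\prime}(\boldsymbol{w}_{mT}^{(k)}) - \nabla_{\boldsymbol{w}}\ell(\boldsymbol{w}_{mT}^{(k)})\bigr],
\end{equation*}
which reduces the claim to evaluating the gradient of the loss difference $\ell^{\prime}-\ell$ at $\boldsymbol{w}_{mT}^{(k)}$.

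Next I would rewrite the benign loss (\ref{eq:loss_fl}) in the same $\kappa$-notation used for the attack loss (\ref{eq:loss_attack}), namely $\ell(\boldsymbol{w}^{(k)}) = \sum_{i=1}^{N_C}\kappa_i(\boldsymbol{x}\mid y=i,\boldsymbol{w}^{(k)})$, and split off the source-class summand to obtain $\ell = \sum_{i\neq s}\kappa_i(\boldsymbol{x}\mid y=i,\boldsymbol{w}^{(k)}) + \kappa_s(\boldsymbol{x}\mid y=s,\boldsymbol{w}^{(k)})$. Comparing term-by-term with (\ref{eq:loss_attack}) shows that all the $i\neq s$ contributions are shared and therefore cancel, leaving $\ell^{\prime}-\ell = \kappa_r(\boldsymbol{x}\mid y=s,\boldsymbol{w}^{(k)}) - \kappa_s(\boldsymbol{x}\mid y=s,\boldsymbol{w}^{(k)})$. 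Substituting the definition of $\kappa$ collapses this to $\ell^{\prime}-\ell = -\,p^{(k)}(y=s)\,\mathbb{E}_{\boldsymbol{x}\mid y=s}[\log f_r(\boldsymbol{x},\boldsymbol{w}^{(k)}) - \log f_s(\boldsymbol{x},\boldsymbol{w}^{(k)})]$.

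Finally I would differentiate with respect to $\boldsymbol{w}$, interchanging the gradient and the data expectation (justified under the usual smoothness assumption on $f_{\boldsymbol{w}}$), which yields $\nabla_{\boldsymbol{w}}(\ell^{\prime}-\ell) = -\,p^{(k)}(y=s)\,\mathbb{E}_{\boldsymbol{x}\mid y=s}[\delta_{rs}(\boldsymbol{x},\boldsymbol{w}^{(k)})]$ once $\delta_{rs}$ is recognized as $\nabla_{\boldsymbol{w}}\log f_r - \nabla_{\boldsymbol{w}}\log f_s$. Inserting this into the one-step expression above and cancelling the two minus signs reproduces (\ref{eq:6}) verbatim. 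The computation is essentially mechanical; the one place that genuinely needs care is the bookkeeping in the middle step---correctly matching the shared $i\neq s$ summands of the two losses so that only the source-class misclassification term survives---since a mismatch there is the sole way the argument could break.
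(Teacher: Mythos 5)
Your proposal is correct and follows essentially the same route as the paper's own proof: a single gradient-descent step from the common weights $\boldsymbol{w}_{mT}^{(k)}$, subtraction of the two updates, and evaluation of $\nabla_{\boldsymbol{w}}(\ell - \ell^{\prime})$ where only the source-class terms survive. Your version is if anything slightly more careful, since you state explicitly the fact the paper leaves implicit---that the two trajectories coincide at $t=mT$ because the attack begins only after round $T$.
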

%%%%%%%%%%%% proof
\begin{proof}
The loss functions employed by benign and malicious clients are represented by Equations (\ref{eq:loss_fl}) and (\ref{eq:loss_attack}), respectively. In each client, the local model is updated based on 
\begin{equation} \label{eq:sgd}
\boldsymbol{w}_{t}^{(k)}=\boldsymbol{w}_{t-1}^{(k)}-\eta \nabla_{\boldsymbol{w}} \ell\left(\boldsymbol{w}_{t-1}^{(k)}\right).
\end{equation}
Suppose the model weights without an attack in the $k$-th client at time $t=mT+1$ are denoted by $\boldsymbol{w}_{mT+1}^{(k)}$, and the model weights under the vanilla attack are denoted by $\boldsymbol{w}_{mT+1}^{\prime (k)}$. The weight divergence is 
\begin{equation} \label{eq:divergence_attack}
\begin{aligned}
    & \Delta^{(k)}_{mT+1} =
    \boldsymbol{w}_{mT+1}^{\prime (k)} - \boldsymbol{w}_{mT+1}^{(k)} \\
    &\stackrel{\text{put (\ref{eq:sgd}) in}} = \eta \left[  \nabla_{\boldsymbol{w}} \ell\left(\boldsymbol{w}_{mT}^{(k)}\right) -  \nabla_{\boldsymbol{w}} \ell ^{\prime}\left(\boldsymbol{w}_{mT}^{(k)}\right) \right] \\
    & \stackrel{\text{put (\ref{eq:loss_fl}) and (\ref{eq:loss_attack}) in}} = \eta p^{(k)}(y=s)  \mathbb{E}_{\boldsymbol{x} \mid y=s} [ \delta_{rs}(\boldsymbol{x}, \boldsymbol{w}^{ (k)}_{mT}) ], \\
\end{aligned}
\end{equation}
where 
\begin{equation}
\label{eq:delta_rs}
\delta_{rs}(\boldsymbol{x}, \boldsymbol{w}^{ (k)}_{mT}) =  \nabla_{\boldsymbol{w}} \log f_{r}(\boldsymbol{x}, \boldsymbol{w}^{(k)}_{mT}) - \nabla_{\boldsymbol{w}} \log f_{s}(\boldsymbol{x}, \boldsymbol{w}^{(k)}_{mT}). 
\end{equation}
\end{proof}

% \textbf{(2) Vanilla attack to boosted attack.}
\subsubsection{Transition from vanilla attack to boosted attack}  
To implement the vanilla targeted poisoning attack, we introduce an intermediate class $z$ and include all data samples from this class to form the Amplifier set. By mislabeling this Amplifier set with the soft label $s_z = \lambda_z \boldsymbol{e}_r + (1-\lambda_z)\boldsymbol{e}_z$, the loss function is modified to 
\begin{equation} \label{eq:loss_boost}
\begin{split}
&\ell^{\prime \prime} (\boldsymbol{w}^{(k)})  \\
&=\sum_{i=1, i\neq s , i\neq z}^{N_C} \kappa_{i}(\boldsymbol{x}\mid y=i,  \boldsymbol{w}^{(k)}) +   \kappa_{r}(\boldsymbol{x}\mid y=s,  \boldsymbol{w}^{(k)}) \\
& + \lambda_z \kappa_{r}(\boldsymbol{x}\mid y=z, \boldsymbol{w}^{(k)}) + (1-\lambda_z)\kappa_{z}(\boldsymbol{x}\mid y=z, \boldsymbol{w}^{(k)}),
\end{split}
\end{equation}
where the last two terms represent the loss resulting from the data label modifications in the Amplifier set.

% \textbf{Proposition 2.}
\begin{proposition}
Following Proposition 1, consider a boosted attack that mislabels the Amplifier set, which comprises all data samples from an intermediate class, with the soft label  $s_z = \lambda_z \boldsymbol{e}_r + (1-\lambda_z)\boldsymbol{e}_z$. In this scenario, if both the vanilla and boosted attacks take place after $T$ communication rounds, the local weight divergence at time $t=mT+1$ with and without BoTPA is given by
\begin{equation}
    \begin{aligned}
    \Delta^{\prime (k)}_{mT+1}& = 
    \boldsymbol{w}_{mT+1}^{\prime \prime (k)}-\boldsymbol{w}_{mT+1}^{\prime (k)} \\
        &= \lambda_z \eta p^{(k)}(y=z) \mathbb{E}_{\boldsymbol{x} \mid y=z} [   \delta_{rz} (\boldsymbol{x}, \boldsymbol{w}^{ (k)}_{mT}) ],
    \end{aligned}
\end{equation}
where $\boldsymbol{w}^{\prime \prime (k)}$ represents the model weights under the boosted attack in the $k$-th client. 
%(See more details of Proposition 2 in Appendix \ref{sec:proof_2}.)
\end{proposition}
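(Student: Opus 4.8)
The plan is to mirror the proof of Proposition 1 exactly, since the two propositions differ only in which pair of loss functions is compared. First I would reuse the same abstraction used there: because both the vanilla and the boosted attack begin by relabeling the source class identically, the two scenarios share the same local weights $\boldsymbol{w}_{mT}^{(k)}$ at time $t = mT$ and first diverge at $t = mT+1$, when the boosted attack additionally applies the soft label $s_z$ to the Amplifier set. Substituting the single gradient-descent step from Equation~(\ref{eq:sgd}) into both scenarios and cancelling the common starting weights reduces the claim to evaluating a gradient of a loss difference:
\begin{equation}
\Delta^{\prime (k)}_{mT+1} = \boldsymbol{w}_{mT+1}^{\prime \prime (k)} - \boldsymbol{w}_{mT+1}^{\prime (k)} = \eta\left[\nabla_{\boldsymbol{w}}\ell^{\prime}(\boldsymbol{w}_{mT}^{(k)}) - \nabla_{\boldsymbol{w}}\ell^{\prime \prime}(\boldsymbol{w}_{mT}^{(k)})\right].
\end{equation}

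The core step is then to compute $\ell^{\prime} - \ell^{\prime \prime}$ term by term from Equations~(\ref{eq:loss_attack}) and~(\ref{eq:loss_boost}). The essential bookkeeping is to split the summation in $\ell^{\prime}$ so that the class-$z$ contribution is isolated: because $z \neq s$, the index $i = z$ appears in the regular sum of $\ell^{\prime}$ as the correctly-labeled term $\kappa_{z}(\boldsymbol{x}\mid y=z, \boldsymbol{w}^{(k)})$. In $\ell^{\prime \prime}$ this single term is replaced by $\lambda_z \kappa_{r}(\boldsymbol{x}\mid y=z, \boldsymbol{w}^{(k)}) + (1-\lambda_z)\kappa_{z}(\boldsymbol{x}\mid y=z, \boldsymbol{w}^{(k)})$, while every other summand---all $i \neq s, z$ terms together with the source-misclassification term $\kappa_{r}(\boldsymbol{x}\mid y=s, \boldsymbol{w}^{(k)})$---is identical in both losses and cancels. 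After the cancellation the difference collapses to $\ell^{\prime} - \ell^{\prime \prime} = \lambda_z[\kappa_{z}(\boldsymbol{x}\mid y=z, \boldsymbol{w}^{(k)}) - \kappa_{r}(\boldsymbol{x}\mid y=z, \boldsymbol{w}^{(k)})]$.

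Finally I would unfold the definition of $\kappa$ to rewrite this difference as $\lambda_z p^{(k)}(y=z)\,\mathbb{E}_{\boldsymbol{x}\mid y=z}[\log f_{r} - \log f_{z}]$, then apply $\nabla_{\boldsymbol{w}}$ and push the gradient through the expectation by linearity to obtain $\lambda_z p^{(k)}(y=z)\,\mathbb{E}_{\boldsymbol{x}\mid y=z}[\delta_{rz}(\boldsymbol{x}, \boldsymbol{w}_{mT}^{(k)})]$, with $\delta_{rz} = \nabla_{\boldsymbol{w}}\log f_{r} - \nabla_{\boldsymbol{w}}\log f_{z}$; multiplying by $\eta$ yields the stated result. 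I expect the only delicate part to be the cancellation bookkeeping---confirming that the source term is untouched by the Amplifier relabeling and that the $\lambda_z$-weighted replacement collapses exactly to $\lambda_z(\kappa_z - \kappa_r)$. Everything else is a direct replay of Proposition 1 with the intermediate class $z$ playing the role the source class $s$ played there, which simultaneously makes transparent why the boosted update direction $\delta_{rz}$ aligns with the vanilla direction $\delta_{rs}$.
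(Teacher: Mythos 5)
Your proposal is correct and follows essentially the same route as the paper's proof: substitute the single gradient-descent step from Equation~(\ref{eq:sgd}) into both scenarios sharing the common weights $\boldsymbol{w}_{mT}^{(k)}$, reduce the divergence to $\eta\left[\nabla_{\boldsymbol{w}}\ell^{\prime}-\nabla_{\boldsymbol{w}}\ell^{\prime\prime}\right]$, and then substitute Equations~(\ref{eq:loss_attack}) and~(\ref{eq:loss_boost}) to collapse to $\lambda_z \eta\, p^{(k)}(y=z)\,\mathbb{E}_{\boldsymbol{x}\mid y=z}[\delta_{rz}]$. The only difference is that you spell out the term-by-term cancellation and the identity $\ell^{\prime}-\ell^{\prime\prime}=\lambda_z\left[\kappa_z-\kappa_r\right]$ explicitly, which the paper leaves implicit in its ``put (\ref{eq:loss_attack}) and (\ref{eq:loss_boost}) in'' step.
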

%%%%%%%proof
\begin{proof}
    % Local clients update models based on the SGD algorithm.
    Given the Amplifier set constructed using an intermediate class $z$ and its designed soft label $s_z$, the loss function for the boosted attack is represented by Equation~(\ref{eq:loss_boost}). 
    Suppose the model weights under the boosted attack at the time $t = mT+1$ are $\boldsymbol{w}_{mT+1}^{\prime \prime (k)}$. The weight divergence resulting from BoTPA, in addition to the vanilla attack, is then given by
\begin{equation} \label{eq:divergence_boost}
\begin{aligned}
    & \Delta^{\prime (k)}_{mT+1} =
    \boldsymbol{w}_{mT+1}^{\prime \prime (k)} - \boldsymbol{w}_{mT+1}^{\prime (k)} \\
    &\stackrel{\text{put (\ref{eq:sgd}) in}} = \eta \left[  \nabla_{\boldsymbol{w}} \ell ^{\prime} \left(\boldsymbol{w}_{mT}^{(k)}\right) -  \nabla_{\boldsymbol{w}} \ell ^{\prime\prime}\left(\boldsymbol{w}_{mT}^{(k)}\right) \right] \\
    & \stackrel{\text{put (\ref{eq:loss_attack}) and (\ref{eq:loss_boost}) in}} = \lambda_z \eta p^{(k)}(y=z) \mathbb{E}_{\boldsymbol{x} \mid y=z} [\delta_{rz} (\boldsymbol{x}, \boldsymbol{w}^{ (k)}_{mT})],
\end{aligned}
\end{equation}
where
\begin{equation} 
\label{eq:delta_rz}
\delta_{rz} (\boldsymbol{x}, \boldsymbol{w}^{ (k)}_{mT}) = \nabla_{\boldsymbol{w}} \log f_{r}(\boldsymbol{x}, \boldsymbol{w}^{(k)}_{mT}) - \nabla_{\boldsymbol{w}} \log f_{z}(\boldsymbol{x}, \boldsymbol{w}^{(k)}_{mT}).
\end{equation}
\end{proof}

\textbf{Analysis of Intermediate Class Selection:}
From Propositions 1 and 2, the vanilla and boosted poisoning effects steer the local model update in the directions of $ \delta_{rs}(\boldsymbol{x}, \boldsymbol{w}^{ (k)}_{mT})$  and $\delta_{rz} (\boldsymbol{x}, \boldsymbol{w}^{ (k)}_{mT})$ , respectively, as shown in Equations~(\ref{eq:delta_rs}) and~(\ref{eq:delta_rz}).
According to Equation (\ref{eq:IS_contrib}), the intermediate class $z$ is selected based on its similarity to the source class $s$ in contributing to the model update.
Additionally, in the actual computation, $- \log f_{\boldsymbol{w}}(\boldsymbol{x})$ is used to replace $f_{\boldsymbol{w}}(\boldsymbol{x})$ considering the cross-entropy loss function. 
Therefore, 
$\delta_{rz} (\boldsymbol{x}, \boldsymbol{w}^{ (k)}_{mT})$ closely aligns with $ \delta_{rs}(\boldsymbol{x}, \boldsymbol{w}^{ (k)}_{mT})$, signifying that both the vanilla and boosted poisoning effects consistently drive the malicious models to update in analogous vector directions towards the adversarial objective, as shown in Figure~\ref{fig:update_direction}. This explains why BoTPA can effectively enhance the impact of targeted poisoning attacks.
%while maintaining the underlying characteristics of the original attack vectors. 

\textbf{Analysis of Soft Label Design:} In Equations~(\ref{eq:divergence_boost})
and
(\ref{eq:delta_rz}), the weight divergence is scaled by the factor $\lambda_z$, which originates from the soft label assigned to the intermediate class $z$. This indicates that, by utilizing soft labels, BoTPA strategically manipulates the feature representations of intermediate classes, positioning them as bridges between the source and target classes. This adjustment in the latent feature space reduces the separation of the decision boundary. Furthermore, compared to mislabeling intermediate classes via hard labels, the increased entropy in soft labels distributes the gradient contributions across the intermediate classes and the source class. This results in more subtle updates to the model parameters, thereby preserving the model's normal behaviors on the intermediate classes.

The boosting effect can accumulate when there are multiple training iterations locally within each communication round. Due to the enhanced poisoning effect in local models, the attacks can eventually influence the global model in favor of the adversarial objective.
Overall, the theoretical analysis of weight divergence underscores the critical role of accurately identifying and leveraging intermediate classes. This approach enables more effective manipulation of the model's decision boundaries, thereby amplifying the impact of the poisoning efforts.
This will be further demonstrated in Section~\ref{sec:feature_dis}.

\section{Experimental Setup} \label{sec:exp_setup}
In this section, we describe the datasets, FL system settings, evaluation metrics, and configurations for BoTPA.

\subsection{Datasets, FL systems and Adversarial Settings}
\textbf{Datasets.} We use three widely used datasets in the study of poisoning attacks~\cite{fltrust,flare}, Fashion-MNIST (FMNIST)\cite{FMNIST}, CIFAR-10 \cite{cifar10} and Colorectal Histology MNIST (CH-MNIST) \cite{chmnist}. FMNIST consists of ten classes of grey-scale fashion-related images (e.g., dresses, shirts). The size of each image is 28 $\times$ 28, with a total of 60,000 training images and 10,000 testing images. 
%The model architecture is composed of two convolutional layers with a kernel size of $5 \times 5$ followed by two dense layers.
CIFAR-10 includes ten classes of colorful images. The dimensions of each image are $32 \times 32 \times 3$, with a total of 50,000 training images and 10,000 testing images. 
%The model architecture is composed of four convolutional layers with a kernel size of $3 \times 3$ followed by three dense layers. 
CH-MNIST consists of grey-scale images from eight different classes of human colorectal cancer, each sized at $64 \times 64$. It includes 5,000 images divided into a training set (4,000 images) and a testing set (1,000 images). 
%The neural network for CH-MNIST has six convolutional layers with a kernel size of $3 \times 3$ followed by three dense layers.

% In all cases, convolutional layers are followed by regularizers, such as batch normalization and maxpooling layers. 

\textbf{Data Distributions.} In IID scenarios (see Sections~\ref{boost_data_poison} and~\ref{sec:model_poisoning}), the dataset is uniformly and randomly assigned to each client. In non-IID scenarios (see Section~\ref{boost_noniid}), we adopt the setting from~\cite{FedNova, contra, FL_backdoor2} to simulate data imbalance using a Dirichlet distribution~\cite{dirichlet}, denoted as $\operatorname{Dir}_K(\beta)$. Here, $\beta$, ranging from 0 to positive infinity, represents the imbalance level, with smaller values indicating higher imbalance levels. 
% For each class $c$, we get a distribution $p_{c} \sim \operatorname{Dir}_K(\beta)$ over all clients and then allocate data samples to clients based on the  distribution $p_{c}$. 
In practice, $\beta$ is set to 0.5 and 1 to show BoTPA's performance in scenarios with different levels of data imbalance.\footnote{Please refer to~\cite{contra} for visualizations of the data distributions.}
%Figure~\ref{fig:non-iid_distribution} visualizes the non-IID data distributions using CIFAR-10.

% \begin{figure}[t]
% \vspace{-0.1in}
% \centerline{\includegraphics[width=1\linewidth]{figs/dirichlet.png}}
% \caption{Non-IID data distributions for CIFAR-10.}
% \label{fig:non-iid_distribution}
% \vspace{-0.1in}
% \end{figure}

\textbf{FL systems}:
% The details of the model architectures are presented in Appendix \ref{section:architectures}.
The utilized FL system includes a global server and a set of local clients. 
Following the FL settings outlined in \cite{fltrust} and \cite{flare}, we configure 20 local clients for the FMNIST and CIFAR-10 datasets, and 10 local clients for the CH-MNIST dataset, considering its smaller size.
We adopt neural network architectures based on the design in~\cite{flare}, which integrate multiple Convolution + ReLU layers with Batch Normalization and Max Pooling, followed by Dense layers.  The hyperparameters for these models are detailed as follows. For the FMNIST dataset, the model consists of 2 convolutional layers with a kernel size of $5\times5$, and a hidden dense layer with 128 nodes. For the CIFAR-10 and CH-MNIST datasets, the models include 4 and 6 convolutional layers, respectively, with a kernel size of $3\times3$, and two hidden dense layers with 256 and 128 nodes. 
% During the training process, all clients are chosen in each iteration and the models are locally updated for five epochs before they are sent back to the server. 
During the training process, each local model is updated locally for five epochs before being sent back to the server.
The total number of communication rounds between the server and clients is 25 for CIFAR-10 and FMNIST, and 50 for CH-MNIST. The Adam optimization algorithm is used with a learning rate of $10^{-3}$ for FMNIST and CIFAR-10, and $10^{-4}$ for CH-MNIST.

\textbf{Adversarial Settings:} 
%%%%%%
Poisoning attacks in FL can be broadly categorized into two types: data poisoning and model poisoning. Model poisoning attacks generally achieve higher success rates and greater stealthiness, while data poisoning attacks are less effective due to their limited attack capabilities. BoTPA, as an attack boosting method operating during the pre-training stage, is adaptable to various attack scenarios. To comprehensively evaluate its effectiveness, we apply it to both data poisoning and model poisoning scenarios.
%%%%%
To determine the malicious client ratios, we follow the commonly adopted adversarial settings in existing attack and defense studies~\cite{FL_datapoisoning,flare}. Typically, the ratio of malicious clients starts at a low level. In this paper, we begin with 5\%, corresponding to one or two clients in the FL systems across different datasets. This small ratio is chosen to evaluate the attack’s effectiveness under strict attack conditions. Then, we gradually increase the ratio to 30\%, aligning with the Byzantine fault tolerance threshold of 33\%, to assess the attack’s performance across varying attack scenarios.

\subsection{Evaluation Metrics}
\textbf{Attack Success Rate (ASR)~\cite{fltrust, flare}:} 
% For targeted poisoning attacks, the adversarial goal is to mislead classification from the source class $c_\text{src}$ to the target class $c_\text{tgt}$. Thus,
We use ASR to evaluate the targeted misclassification rate, defined as the ratio of the number of source data samples misclassified as the target to the total number of source samples. The ASRs of vanilla attacks and boosted attacks utilizing BoTPA are referred to as V-ASR and B-ASR, respectively.

\textbf{Relative Increase in Attack Success Rate (RI-ASR):} 
This metric identifies the ASR increase caused by boosted attacks relative to vanilla attacks, and it is represented as $\text{RI-ASR} = \frac{\text{B-ASR}-\text{V-ASR}}{\text{V-ASR}}.$ RI-ASR is important for illustrating the performance of BoTPA in boosting the vanilla poisoning attacks. Throughout this paper, the terms ``RI-ASR" and its abbreviation ``RI" will be used interchangeably.

% \textbf{Accuracy (Acc)~\cite{adv_len}:} 
% Given the targeted adversarial objective, the boosted model should maintain a low impact on the performance of intermediate classes included in the Amplifier set. Thus, we evaluate the impact of BoTPA on the class-specific accuracy, to analyze its effect on intermediate classes. Class-specific accuracy is defined as the number of correct classified samples in a specific class over the class size. 

\subsection{BoTPA Configurations}
During the experimentation phase, the attacker constructs an Amplifier set using data samples from a fixed number of intermediate classes. Empirically, we set $N=2$ for FMNIST, $N=4$ for CIFAR-10, and $N=3$ for CH-MNIST, respectively.
To determine the value of N, we incrementally increase its value from 1 in an ``easy case'' until we observe a decline in RI-ASR, choosing the value that yields the highest RI-ASR. The term ``easy case'' refers to the attack scenario involving a source-target combination that achieves a high RI-ASR in boosting data poisoning attacks (see Sections~\ref{boost_data_poison} and~\ref{sec:model_poisoning} for more details). This operation aims to increase targeted misclassification and minimize the impact on other classes. The performance of targeted poisoning attacks varies significantly across different source-target combinations and percentages of malicious clients. Consequently, it is not practical to identify a single optimal N across all class combinations due to the partial order nature of the optimization. 
Considering the slight variations in class-specific accuracy across different training rounds after convergence, we evaluate CIFAR-10 and FMNIST by averaging the ASRs from the 16th to the 25th rounds over three runs, and CH-MNIST by averaging the ASRs from the 30th to the 50th rounds.

% Over three iterations for CIFAR-10 and FMNIST, we calculate the average ASR from the 16th to 25th training rounds. For CH-MNIST, ASR is averaged from the 30th to 50th training rounds over three runs for each setting. 
\begin{table*}[t]
    \centering
    \caption{Performance comparison between vanilla and boosted data poisoning attacks. ``Median case'' and ``best case'' refer to the source-target combinations with the median and highest RI-ASRs (\%), respectively, across all possible combinations. V-ASR (\%) and B-ASR (\%) refer to the corresponding vanilla and boosted ASRs of these selected cases. Note that the selected source-target combinations vary with different percentages of malicious clients.}
    \label{tab:boost_data_poison}
\newcolumntype{I}{!{\vrule width 1.0pt}}
\resizebox{2\columnwidth}{!}{
\begin{tabular}{c|c|c|c|c|c|c|c|c|c|c|c|c|c|c}
\hline
% \multicolumn{1}{|cI}{\multirow{3}{*}{}} & \multicolumn{8}{cI}{Percentage of malicious clients}   \\ \cline{2-15} 
\multicolumn{1}{cI}{}   & \multicolumn{3}{cI}{5\%}                          & \multicolumn{3}{cI}{10\%}    & \multicolumn{3}{cI}{20\%}       &\multicolumn{3}{c}{30\%}                                            \\ \cline{2-13} 
\multicolumn{1}{cI}{}                           & \multicolumn{1}{c|}{V-ASR}  & \multicolumn{1}{c|}{B-ASR} & \multicolumn{1}{cI}{\textbf{RI}} 
& \multicolumn{1}{c|}{V-ASR}  & \multicolumn{1}{c|}{B-ASR} & \multicolumn{1}{cI}{\textbf{RI}} 
& \multicolumn{1}{c|}{V-ASR} & \multicolumn{1}{c|}{B-ASR} & \multicolumn{1}{cI}{\textbf{RI}}  
& \multicolumn{1}{c|}{V-ASR}  & \multicolumn{1}{c|}{B-ASR} & \multicolumn{1}{c}{\textbf{RI}}   
   \\ \hline \hline

\multicolumn{13}{c}{FMNIST}  \\ \hline
%%%%%%%%%%%FMNIST:median
\multicolumn{1}{cI}{Median case}     
& \multicolumn{1}{c|}{10.2} & \multicolumn{1}{c|}{11.9} & \multicolumn{1}{cI}{\textbf{16.6}}  
& \multicolumn{1}{c|}{11.7} & \multicolumn{1}{c|}{14.7} & \multicolumn{1}{cI}{\textbf{26.2}}            
& \multicolumn{1}{c|}{22.3} & \multicolumn{1}{c|}{27.3} & \multicolumn{1}{cI}{\textbf{22.4}}     
& \multicolumn{1}{c|}{32.4} & \multicolumn{1}{c|}{40.6}  & \multicolumn{1}{c}{\textbf{25.3}}                   \\ \hline
%%%%%%%%%%%FMNIST:best
\multicolumn{1}{cI}{Best case} 
& \multicolumn{1}{c|}{9.8} & \multicolumn{1}{c|}{12.1} & \multicolumn{1}{cI}{\textbf{23.4}}
& \multicolumn{1}{c|}{10.9} & \multicolumn{1}{c|}{14.3} & \multicolumn{1}{cI}{\textbf{30.7}}
& \multicolumn{1}{c|}{19.1} & \multicolumn{1}{c|}{24.9} & \multicolumn{1}{cI}{\textbf{30.8}} 
& \multicolumn{1}{c|}{4.9}   & \multicolumn{1}{c|}{11.4}     & \multicolumn{1}{c}{\textbf{130.9}}
                  \\ \hline  \hline
                  
\multicolumn{13}{c}{CIFAR-10}  \\ \hline
%%%%%%%%%%%CIFAR10:median
\multicolumn{1}{cI}{Median case}   
& \multicolumn{1}{c|}{13.2}             & \multicolumn{1}{c|}{15.2}   &  \multicolumn{1}{cI}{\textbf{15.3}} 
 & \multicolumn{1}{c|}{14.4}  & \multicolumn{1}{c|}{17.4}  & \multicolumn{1}{cI}{\textbf{21.5}}  
& \multicolumn{1}{c|}{19.4}      & \multicolumn{1}{c|}{27.5}       & \multicolumn{1}{cI}{\textbf{41.8}}   
& \multicolumn{1}{c|}{31.4}         & \multicolumn{1}{c|}{43.1}   & \multicolumn{1}{c}{\textbf{36.9}}
            \\ \hline
%%%%%%%%%%%CIFAR10:best
\multicolumn{1}{cI}{Best case} 
& \multicolumn{1}{c|}{8.8}  & \multicolumn{1}{c|}{10.7}    & \multicolumn{1}{cI}{\textbf{21.2}}
& \multicolumn{1}{c|}{9.2}  & \multicolumn{1}{c|}{12.4}    & \multicolumn{1}{cI}{\textbf{33.6}}
& \multicolumn{1}{c|}{7.1} & \multicolumn{1}{c|}{15.2} & \multicolumn{1}{cI}{\textbf{115.1}}              
& \multicolumn{1}{c|}{10.6} & \multicolumn{1}{c|}{21.5} & \multicolumn{1}{c}{\textbf{102.7}}             
    \\ \hline \hline

\multicolumn{13}{c}{CH-MNIST}  \\ \hline
%%%%%%%CH-MNIST:median
\multicolumn{1}{cI}{Median case}  
& \multicolumn{1}{c|}{-}  & \multicolumn{1}{c|}{-} & \multicolumn{1}{cI}{-}  
& \multicolumn{1}{c|}{17.1}  & \multicolumn{1}{c|}{20.0}  & \multicolumn{1}{cI}{\textbf{17.2}} 
 & \multicolumn{1}{c|}{12.8}    & \multicolumn{1}{c|}{15.8}   & \multicolumn{1}{cI}{\textbf{23.5}}   
& \multicolumn{1}{c|}{20.0}   & \multicolumn{1}{c|}{26.6}  & \multicolumn{1}{c}{\textbf{32.9}}          
        \\ \hline
%%%%%%%CH-MNIST:best
\multicolumn{1}{cI}{Best case} 
& \multicolumn{1}{c|}{-} & \multicolumn{1}{c|}{-}  & \multicolumn{1}{cI}{-}  
  & \multicolumn{1}{c|}{10.3}  & \multicolumn{1}{c|}{12.9} & \multicolumn{1}{cI}{\textbf{24.7}}
& \multicolumn{1}{c|}{10.9} & \multicolumn{1}{c|}{18.6} & \multicolumn{1}{cI}{\textbf{70.0}}     
& \multicolumn{1}{c|}{12.6}  & \multicolumn{1}{c|}{30.9}  & \multicolumn{1}{c}{\textbf{145.7}}      
\\ \hline
\end{tabular}
}
% \vspace{-0.1in}
\end{table*}

\section{Evaluation} \label{sec:evaluation}
This section presents the performance evaluation of BoTPA on data and model poisoning attacks in various scenarios. 
Specifically, attacks are conducted in IID settings within Sections~\ref{boost_data_poison} and~\ref{sec:model_poisoning}, while Section~\ref{boost_noniid} focuses on the non-IID settings.
Next, we conduct ablation studies on various construction strategies for the Amplifier set and different architectures of the surrogate model. Finally, we visualize the feature distributions in the latent space to interpret how BoTPA affects the decision boundary between source and target classes.

\subsection{Targeted Data Poisoning Attack} \label{boost_data_poison}
In vanilla data poisoning attacks, the attacker changes source data labels to misclassify a specific source into an incorrect target. The attack performance varies with different source-target class combinations within a fixed modification budget. Therefore, we use the following two approaches to determine the source and target classes. 
    (1) \textbf{Median Case}: select the source and target with the \textit{median} RI-ASR.
    (2) \textbf{Best Case}: select the source and target with the \textit{highest} RI-ASR.
The median cases present the average boosting performance across all possible source-target combinations, whereas the best cases exemplify the utmost boosting effectiveness in terms of RI-ASR.
 While prior research focused on attacking cases with specific sources and targets, our paper presents a more generalized boosting approach.

 \begin{figure}[t] 
\centerline{\includegraphics[width=1\linewidth]{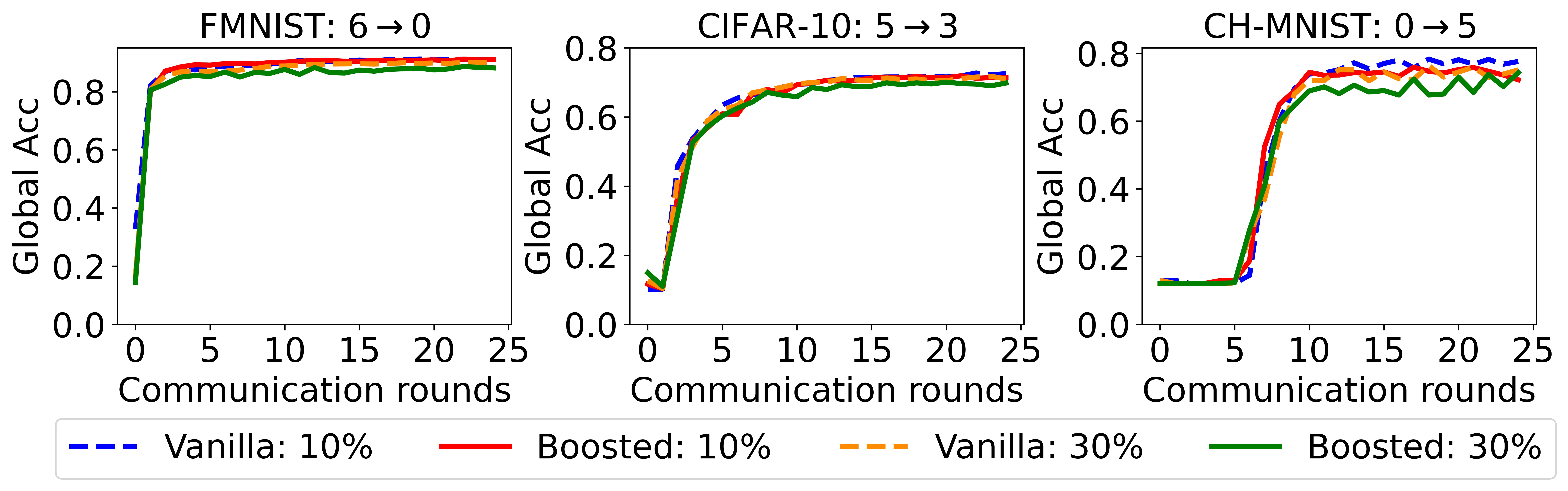}}
\caption{Global model accuracy under vanilla and boosted data poisoning attacks. $s \rightarrow r$ denotes an attack aiming to misclassify data samples from class $s$ to class $r$.
% The degradation of global accuracy is negligible, which maintains the stealthiness of the boosted attack.
}
\label{fig:acc}
% \vspace{-0.1in}
\end{figure}

 Table \ref{tab:boost_data_poison} presents the median and best cases under boosted data poisoning attacks, with the percentage of malicious clients ranging from 5\% to 30\%.
 While the median and best cases are selected based on RI-ASRs, their corresponding vanilla and boosted ASRs are also provided to showcase the attack capabilities in these selected cases. 
 Note that the selected source-target combinations differ in scenarios with different percentages of malicious clients; thus, boosted ASRs do not linearly increase with more malicious clients. 
Overall, in median cases, boosted attacks demonstrate RI-ASRs ranging from 15.3\% to 41.8\% across different percentages of malicious clients. This range of RI-ASRs is more representative as median cases consider all possible source-target combinations. Furthermore, in the best cases, boosted attacks achieve RI-ASRs of up to 145.7\%. In these best cases, the vanilla ASRs are consistently lower than those in median cases. Thus, the RI-ASRs illustrate the maximum boosting capability, particularly for source-target combinations that exhibit a weak vanilla poisoning effect.

 Another important observation from Table \ref{tab:boost_data_poison} is that the boosting effect does not necessarily correlate with the performance of vanilla attacks. For instance, in the case of CH-MNIST with 30\% malicious clients, the median case exhibits a higher ASR than the best case under the vanilla attack. However, the boosted ASR of the median case is comparatively weaker.
 This discrepancy arises because BoTPA primarily relies on the class similarity between the source class and intermediate classes. BoTPA can achieve a significant boosting impact if a cluster of intermediate classes shares a high degree of similarity with the source class. Conversely, the boosting technique is less effective if it lacks adequate intermediate classes with high similarity to the source class.

%%%%%%%%%%%%%%%%%% stealthy attacks under krum
\begin{figure}[t]
\begin{minipage}[t]{1\linewidth}
    \includegraphics[width=\linewidth]{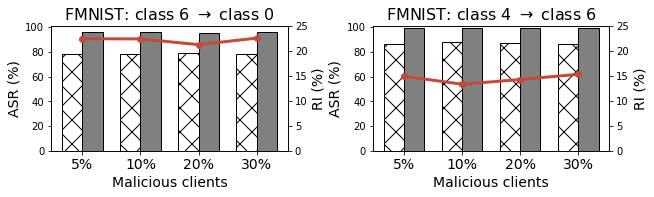}
\end{minipage}%
\\ % \hfill%
\begin{minipage}[t]{1\linewidth}
    \includegraphics[width=\linewidth]{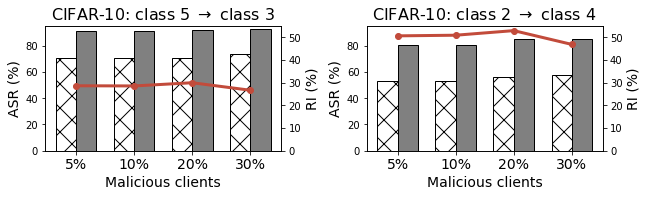}
\end{minipage} 
\\
\begin{minipage}[t]{1\linewidth}
    \includegraphics[width=\linewidth]{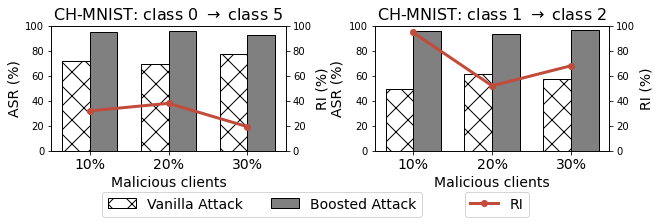}
\end{minipage} 
\caption{Performance comparison of stealthy  model poisoning attacks with varying ratios of malicious clients under Krum and Multi-Krum.}
\label{fig:krum}
% \vspace{-0.1in}
\end{figure}

\textbf{Impact on Global Accuracy.} \label{sec:global_acc}
 % Global accuracy check \cite{adv_len} is a straightforward method to examine whether the FL system is undergoing a poisoning attack. Additionally, 
The objective of our attack is to misclassify solely the source class while minimizing the impact on other classes, including intermediate ones. By monitoring global accuracy, we can assess the overall effect on other classes.
 % Since both the vanilla and boosted data poisoning attacks are implemented without any defenses, we plot the training curves to observe the effect of poisoning on the accuracy of the global model. 
 Figure~\ref{fig:acc} presents the accuracy of global models for various datasets under vanilla and boosted attacks. For clarity, only scenarios with 10\% and 30\% malicious clients are presented.
% The results for each dataset are displayed using one source-target combination.
A detailed explanation of the selection strategy for the source-target combinations depicted in this figure will be provided in Section~\ref{sec:model_poisoning}.
The results indicate a negligible difference in global accuracy between the vanilla and boosted attacks, implying that the boosted attack can take place in a stealthy manner without significantly affecting the accuracy of other classes.

%%%%%%%%%%%%%%%%%%%%%%%%%
\subsection{Targeted Model Poisoning Attack} \label{sec:model_poisoning}
% \subsubsection{Explicit Model Poisoning}
% \label{boost_explicit}
% We set the number of malicious clients to 10\% and employ $\lambda=3$ and $\lambda=5$ to scale up the malicious updates, as the global model fails to ensure convergence for FMNIST and CIFAR-10 when $\lambda$ is increased further. 
% Table~\ref{tab:boost_explicit} presents the ASRs for both vanilla and boosted attacks, as well as the RI-ASRs. For each dataset, the results are displayed for two source-target combinations, referred to as ``case 1" and ``case 2" for clarity. These cases are the median and best cases, respectively, derived from experiments with 10\% malicious clients in data poisoning attacks. 
% The results show that BoTPA can achieve RI-ASRs ranging from 3.6\% to 46.4\% when applied to explicit model poisoning attacks. This indicates that the boosted poisoning effect by BoTPA is further enhanced by the inherent model scaling in the vanilla attack framework.
% Similar to data poisoning attacks, the boosting effectiveness of BoTPA is not necessarily correlated with the inherent poisoning capability of vanilla attacks.

%%%%%%%%%%%%%%%%%%%%%%%%%%%%%%%%%%%%%%%%%%%%%%%%%%%%%%%%%
% \subsubsection{Stealthy Model Poisoning}
% \label{boost_stealthy}
Model poisoning attacks provide more flexibility to evade defense mechanisms. 
%In the context of targeted poisoning attacks, current defense mechanisms~\cite{flare,fltrust} often neglect the varying levels of attack difficulty and protection effectiveness across different source-target combinations, focusing their evaluations solely on specific cases.
To provide a comprehensive evaluation of stealthiness for targeted attacks, we employ mathematically proven byzantine-resilient defenses, including Krum \cite{KRUM}, Muti-Krum \cite{KRUM}, and Median \cite{median}, as well as a state-of-the-art defense, Flame~\cite{flame}.
%to evaluate BoTPA under defensive frameworks. 
The following describes the settings of model poisoning attacks under these defenses. 
    (1) \textit{Krum and Multi-Krum}: Attackers can break Krum by compromising a single client. Consequently, under Krum, there is only one malicious client, representing 5\% of the clients for FMNIST and CIFAR-10, and 10\% for CH-MNIST. In the case of Multi-Krum, attackers vary the percentage of malicious clients from 10\% to 30\%. Correspondingly, the number of local model updates selected by the defense also ranges from 10\% to 30\%.
    (2) \textit{Median and Flame}: Similarly, the percentage of malicious clients varies from 5\% to 30\%.

%%%%%%%%%%%%%%%%%% stealthy attacks under MEDIAN
\begin{figure}[t]
\begin{minipage}[t]{1\linewidth}
    \includegraphics[width=\linewidth]{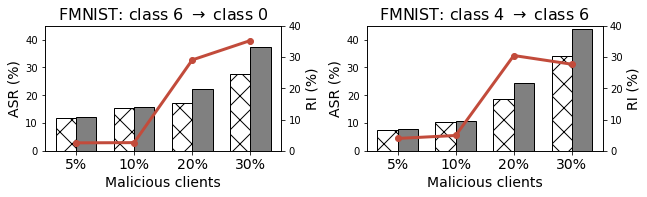}
\end{minipage}%
\\ % \hfill%
\begin{minipage}[t]{1\linewidth}
    \includegraphics[width=\linewidth]{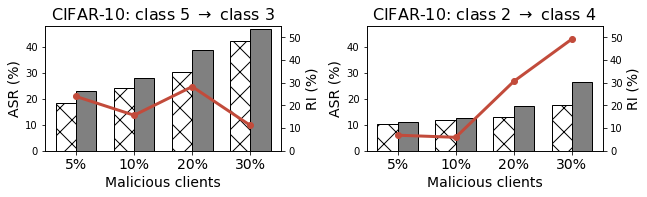}
\end{minipage} 
\\
\begin{minipage}[t]{1\linewidth}
    \includegraphics[width=\linewidth]{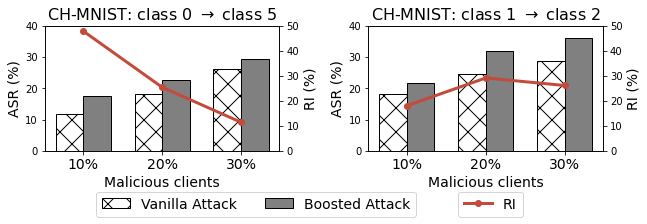}
\end{minipage} 
\caption{Performance comparison of stealthy model poisoning attacks under the Median defense.}
\label{fig:median}
\end{figure}

%%%%%%%%%%%%%%%%%% stealthy attacks under flame
\begin{figure}[t]
\begin{minipage}[t]{1\linewidth}
    \includegraphics[width=\linewidth]{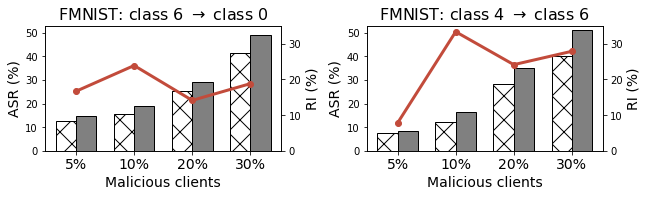}
\end{minipage}%
\\ % \hfill%
\begin{minipage}[t]{1\linewidth}
    \includegraphics[width=\linewidth]{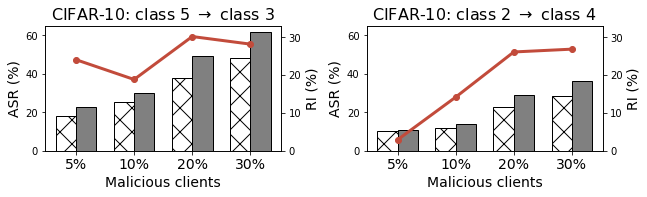}
\end{minipage} 
\\
\begin{minipage}[t]{1\linewidth}
    \includegraphics[width=\linewidth]{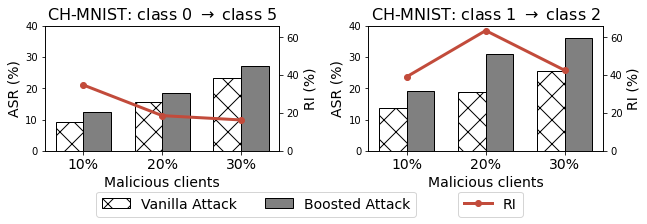}
\end{minipage} 
\caption{Performance comparison of stealthy  model poisoning attacks under the Flame defense.}
\label{fig:flame}
% \vspace{-0.1in}
\end{figure}

For stealthy model poisoning attacks, the coefficients that balance the poisoning loss term with the stealthy loss terms must be adjusted according to the specific source and target classes~\cite{adv_len}.
 Therefore, following the source-target pairs evaluated in \cite{FL_datapoisoning}, we test two pairs of $c_\text{src}$ and $c_\text{tgt}$ for each dataset: one pair is selected from the best cases, while the other pair is chosen from the median cases in the data poisoning scenarios examined in Section~\ref{boost_data_poison}. These pairs represent scenarios with different levels of misclassification difficulty.
Specifically, for the best cases, we test class 6 (shirt) $\rightarrow$ class 0 (T-shirt) for FMNIST, class 5 (dog) $\rightarrow$ class 3 (cat) for CIFAR-10, and class 0 (tumor) $\rightarrow$ class 5 (mucosa) for CH-MNIST. For the median cases, we test class 4 (coat) $\rightarrow$ class 6 (shirt) for FMNIST, class 2 (bird) $\rightarrow$ class 4 (deer)  for CIFAR-10, and class 1 (stroma) $\rightarrow$ class 2 (complex) for CH-MNIST.

Figure \ref{fig:krum} presents the performance comparison between vanilla and boosted attacks against Krum and Multi-Krum.
We can see that BoTPA consistently and significantly boosts the poisoning performance across different datasets. The vanilla stealthy model poisoning attacks can bypass Krum and Multi-Krum defenses, but the poisoning capability is limited to keep the attack stealthy. With the labels in the Amplifier set being modified, the ASRs of boosted attacks reach nearly 100\%. This conclusively proves that the boosted attacks are more powerful, while still bypassing the defenses.

Similarly, Figures~\ref{fig:median} and~\ref{fig:flame} compare the performance of vanilla and boosted attacks against the defenses of Median and Flame, respectively. The results of RI-ASRs show that the boosting performance is not strictly correlated with the percentage of malicious clients. Under these defenses, RI-ASRs are influenced by the trade-off between stealthiness and poisoning impact, which varies depending on the source-target pair.
In most cases involving FMNIST and CIFAR-10, the highest RI-ASR is observed with 30\% malicious clients, achieving up to 49.2\% under Median and 33.3\% under Flame. 
However, in some cases, such as CH-MNIST (class 0 → class 5), the boosting performance degrades as the number of malicious clients increases. This degradation occurs for two reasons: (1) the vanilla ASRs with fewer malicious clients are lower, making it easier to achieve higher RI-ASRs; (2) a larger proportion of malicious clients makes the model changes caused by mislabeling the Amplifier set more detectable and, consequently, more likely to be excluded. Overall, BoTPA demonstrates robust boosting performance across various datasets, despite performance variations among different source-target combinations.

%%%%%%%%%%%%%%%%%% non-iid
\begin{figure*}[htbp]
\begin{minipage}[t]{1\linewidth}
    \includegraphics[width=\linewidth]{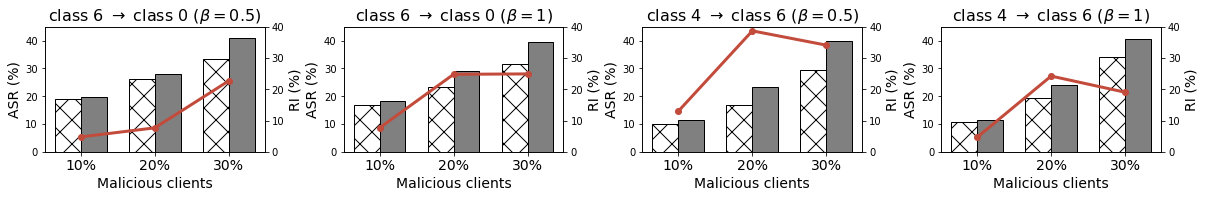}
    \subcaption{FMNIST}
\end{minipage}%
\\ 
\begin{minipage}[t]{1\linewidth}
    \includegraphics[width=\linewidth]{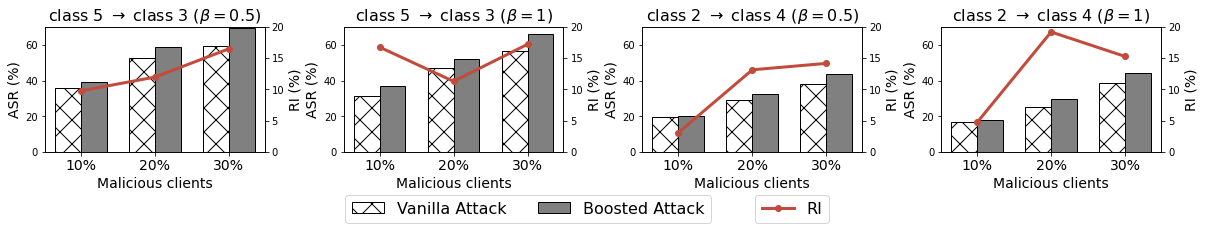}
    \subcaption{CIFAR-10}
\end{minipage} 
\caption{Performance comparison of data poisoning attacks in non-IID scenarios with different imbalance levels.}
\label{fig:non-iid}
% \vspace{-0.1in}
\end{figure*}

\subsection{Performance on Non-IID data}
\label{boost_noniid}
To evaluate the effectiveness of BoTPA in non-IID scenarios, we perform data poisoning attacks with the malicious client percentage starting at 10\%, taking into account the class size imbalance among individual clients.
Since the number of data samples in the source class and intermediate classes significantly changes across different malicious clients, we run each case 20 times and use the average ARSs and RI-ASRs to demonstrate the boosting capability. 
To ensure a fair comparison, the same set of malicious clients is poisoned for both vanilla and boosted attacks in each run. However, across different runs, the malicious clients are randomly selected based on the specified percentage.

Figure~\ref{fig:non-iid} presents the performance of BoTPA on FMNIST and CIFAR-10 datasets. BoTPA achieves an RI-ASR up to 38.7\% across different malicious client ratios. Notably, the boosting performance remains relatively consistent for the same source-target combination across non-IID scenarios with varying data imbalances. The boosting effect is generally more pronounced with a higher percentage of malicious clients.
However, in some cases, the RI-ASRs for 20\% malicious clients are higher than those for 30\%. This is because RI-ASRs represent the relative increases, which are influenced by the baseline ASRs of the vanilla attacks.

%%%%%%%%%%%%%%%%%%%%%%%%%%%%
\subsection{Ablation Study}
Here we conduct ablation experiments to evaluate the impact of different components in BoTPA. For this analysis, ASR values from vanilla attack scenarios are used as a baseline to clearly illustrate how different design choices affect the poisoning effect.

\subsubsection{Power of Intermediate Classes}
BoTPA enhances the poisoning effect by mislabeling the carefully selected data samples from intermediate classes. To demonstrate the effectiveness of the intermediate class selection, we construct the Amplifier set using two methods: (1) bringing data samples from intermediate classes; and (2) randomly selecting data samples from the other classes, instead of the source and target classes. We refer to the latter method as ``Random". We compare the boosting performance of the two methods in Figure~\ref{fig:random}, with each setting repeated 20 times. The Amplifier sets used in BoTPA and Random are of the same size and are mislabeled in the same manner as the soft label design. 
BoTPA exhibits superior boosting performance compared to Random. In many cases, such as CIFAR-10 (class 5 $\rightarrow$ class 3), the Random method even degrades the poisoning effect, instead of boosting it.

%

%%% BoTPA vs Random
\begin{figure}[t]
\begin{minipage}[t]{1\linewidth}
    \includegraphics[width=\linewidth]{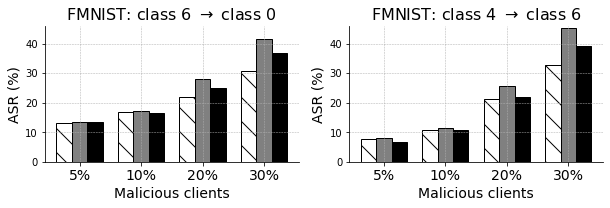}
\end{minipage}%
\\ 
\begin{minipage}[t]{1\linewidth}
    \includegraphics[width=\linewidth]{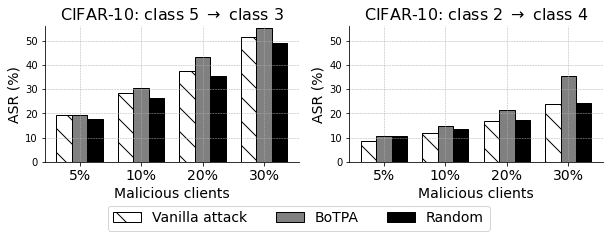}
\end{minipage} 
\caption{Performance comparison of boosted data poisoning attacks using different amplifier set construction methods. }
\label{fig:random}
% \vspace{-0.1in}
\end{figure}
%using Amplifier sets composed of carefully selected intermediate classes and randomly selected samples.

% \subsection{Impact of Surrogate Model's Architecture} 
\subsubsection{Effect of Custom Surrogate Models} \label{sec:different_arch}
Here, we explore a weak attack scenario where attackers have access solely to the data but lack access to local models, including their architectures and parameters. This constraint limits the attackers to data poisoning and necessitates that they independently devise a custom surrogate model for implementing BoTPA.
To examine the effectiveness of the boosting technique under this black-box attack setting, we select two distinct surrogate models for each dataset. For FMNIST, we employ a simple model comprising one convolutional layer with batch normalization and two dense layers, as well as a ResNet20 model. For CIFAR-10, we utilize a LeNet-5 model and a ResNet20 model. These two custom surrogate models are referred to as ``Simple" and ``Deep", respectively.

In Figure \ref{fig:transfer}, we compare the performance of boosted data poisoning attacks using different surrogate model architectures. The term ``Identical" in the figure indicates that the surrogate model shares the same architecture as the actual local model.
The results show that the boosting technique remains effective under the black-box attack setting because of the knowledge transferability in neural networks~\cite{distill_knowledge, know_transferability}. The knowledge learned by a neural network can be transferred to the models with different architectures in the same task.  
Note that the boosted attacks using an identical surrogate model do not necessarily achieve the best performance. 
This phenomenon can be attributed to several factors. Firstly, features from different classes are entangled in the high-dimensional feature space, while some information is lost in logits layer representations. Secondly, our approach only considers the relationships among source, target, and intermediate classes for targeted poisoning, neglecting more complex relationships that may exist in the dataset. Thirdly, the inherent complexity of neural networks further contributes to variability in performance.

\subsection{Feature Distributions in Latent Space}\label{sec:feature_dis}
 To understand how BoTPA affects the decision boundary between the source and target classes, we employ a visualization scheme based on penultimate layer representations, as proposed in~\cite{pen_vis}. This approach allows us to visually observe changes in the feature space.
Figure~\ref{fig:visualization} illustrates the latent feature space under vanilla and boosted poisoning attacks. For clarity, only the source and target classes are shown. We observe that
(1) the source and target classes get closer in the latent feature space and (2) the clusters of source and target samples are tighter. This figure indicates that with BoTPA, data samples from the source class are more likely to cross the decision boundary and be misclassified as the target class.

 %%%% effect of surrogate models
\begin{figure}[t]
\begin{minipage}[t]{1\linewidth}
    \includegraphics[width=\linewidth]{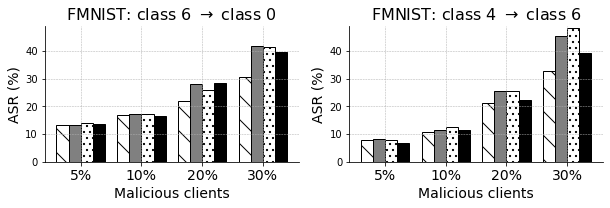}
\end{minipage}%
\\ 
\begin{minipage}[t]{1\linewidth}
    \includegraphics[width=\linewidth]{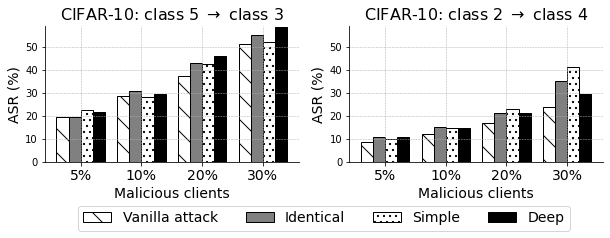}
\end{minipage} 
\caption{Performance comparison of boosted data poisoning attacks with various surrogate model architectures.}
%The boosting technique is effective in most cases under the black-box setting.
\label{fig:transfer}
% \vspace{-0.1in}
\end{figure}

%%%%%%%%%%% latent feature distribution
\begin{figure}[t]
    \centering
    \subfloat[\centering class 6 $\rightarrow$ class 0]{{\includegraphics[scale=0.39]{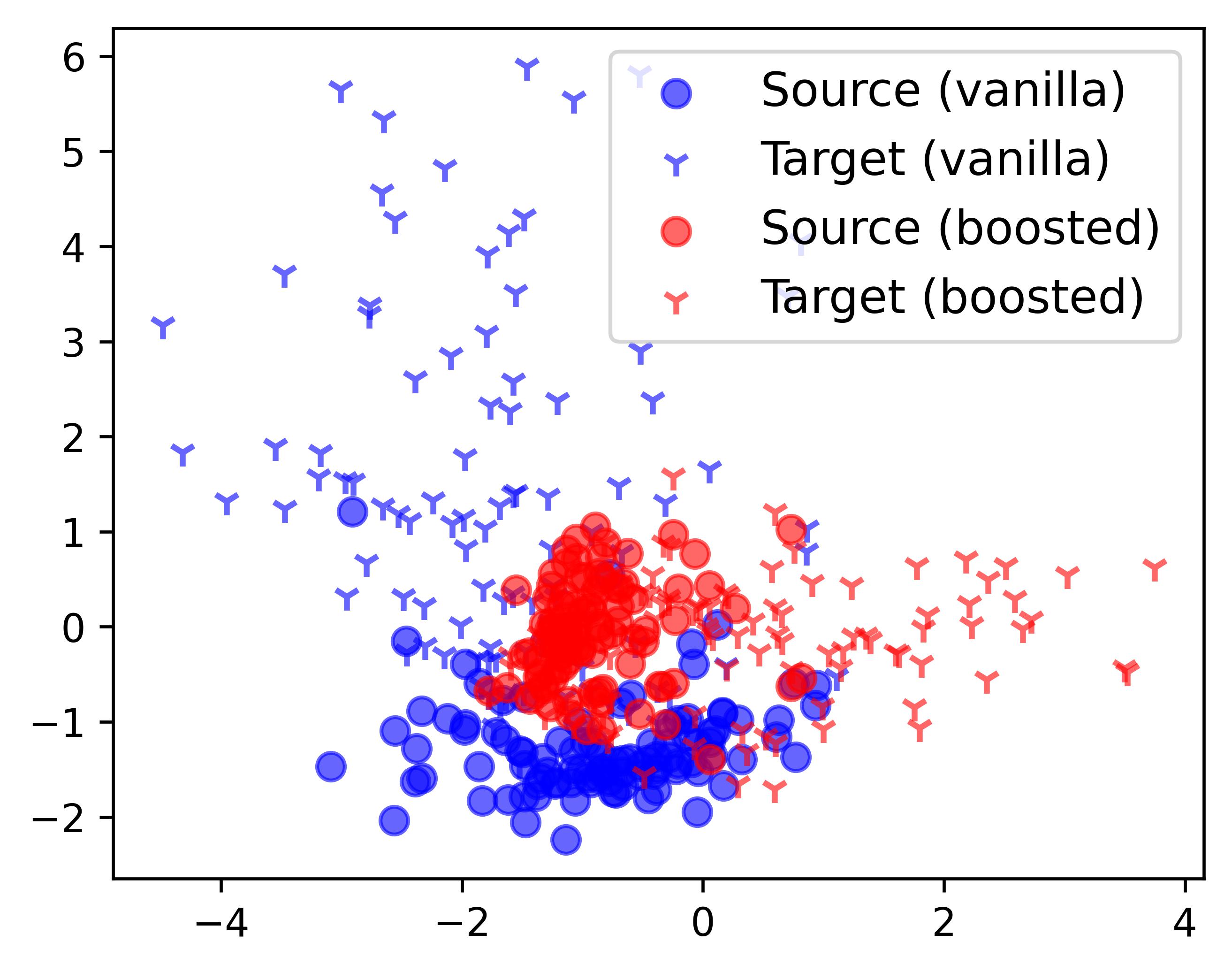} }}%
    % \enspace
    \subfloat[\centering class 4 $\rightarrow$ class 6]{{\includegraphics[scale=0.39]{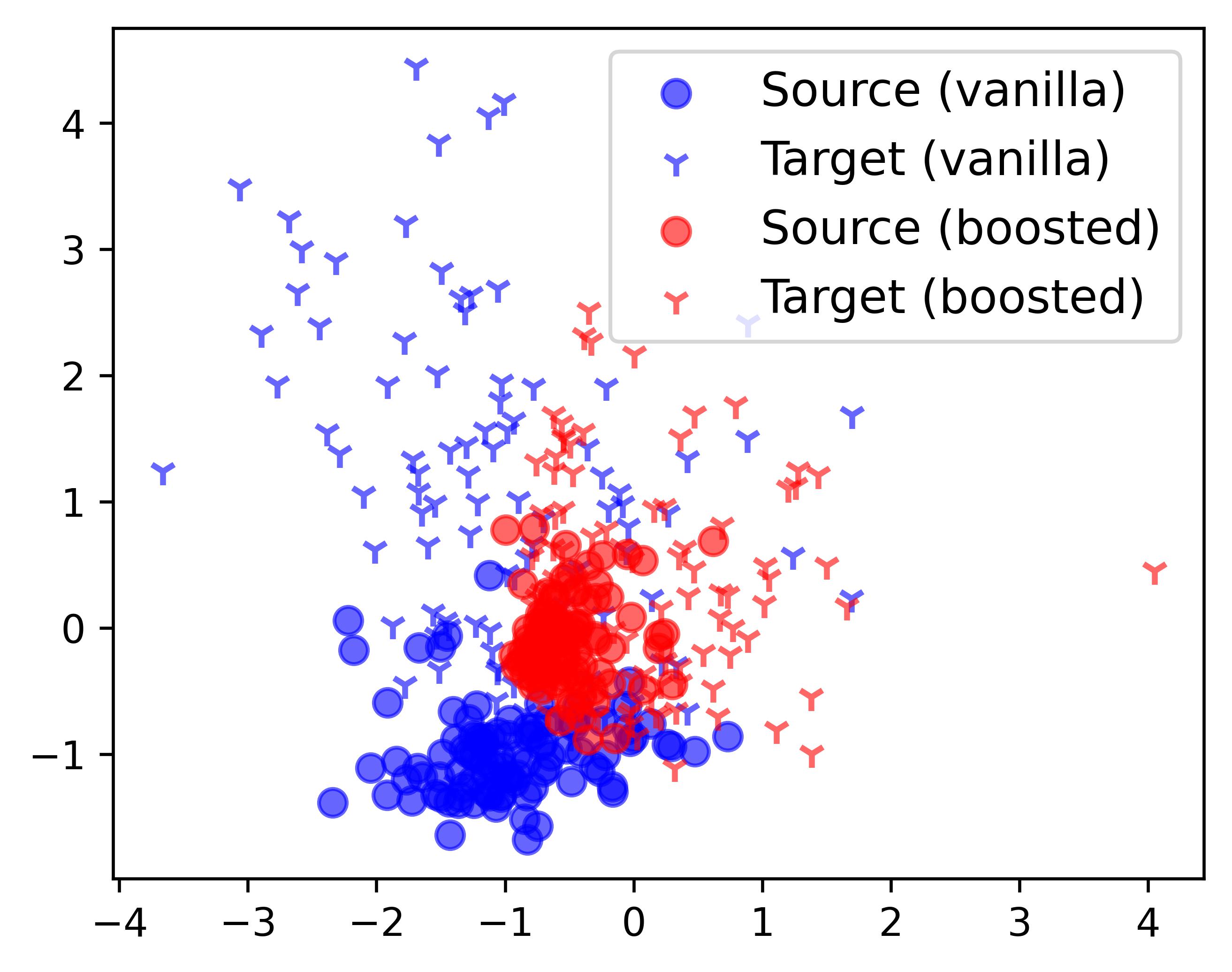} }}%
    \caption{Visualization of latent features for source and target classes in FMNIST under vanilla and boosted data poisoning attacks. 
    For example, ``source (vanilla)" denotes the feature representations of data samples in the source class under the vanilla attack.
    % In latent space, the source and target clusters get shrunk and pulled closer.
    }
    \label{fig:visualization}%
    % \vspace{-0.1in}
\end{figure}

\section{Potential Defense}
% As described in Section \ref{sec:denfense}, typical defenses are designed to counter untargeted poisoning attacks. However, 
Targeted poisoning attacks introduce more subtle changes to malicious updates compared to untargeted poisoning attacks. In this context, we explore a defense mechanism specifically designed to counter targeted poisoning attacks by examining the performance divergence between malicious and benign updates on \textit{individual classes}. Since our boosting technique relies on label modifications on the Amplifier set, a potential defense would involve assessing the performance divergence of different local models on individual classes to identify the possible intermediate classes and malicious clients.

Using the scenario of misclassifying class 4 to class 6 with FMNIST as an example, we compare the divergence between logits layer representations from malicious and benign clients given the same inputs, as illustrated in Figure~\ref{fig:defense}. In this figure, classes other than the source, target, and intermediate classes are designated as \textit{clean classes}. 
It is important to note the difference between Figure~\ref{fig:visualization} and Figure~\ref{fig:defense}. Figure~\ref{fig:visualization} aims to illustrate the changes in feature distributions of the source and target classes based on the global model, whereas Figure~\ref{fig:defense} aims to depict the differences in data representations between benign and malicious local models.

From Figure~\ref{fig:defense}, it is observed that the representations of the intermediate class from malicious and benign clients are visibly separated and dense. In contrast, the representations of the clean class from malicious and benign clients are indistinguishable and sparse. This observation suggests that the density divergence in logits layer representations between different local models is a promising indicator for detecting intermediate classes. If the representation densities of a class are consistent across local models, then the class is considered a clean class. Conversely, if there is significant density divergence, the class can be identified as an intermediate class.
Subsequently, clients in which intermediate classes are detected can be flagged as suspicious and removed from the model aggregation process.

\begin{figure}[t]%
    \centering
    \subfloat[\centering Intermediate class]{{\includegraphics[scale=0.045]{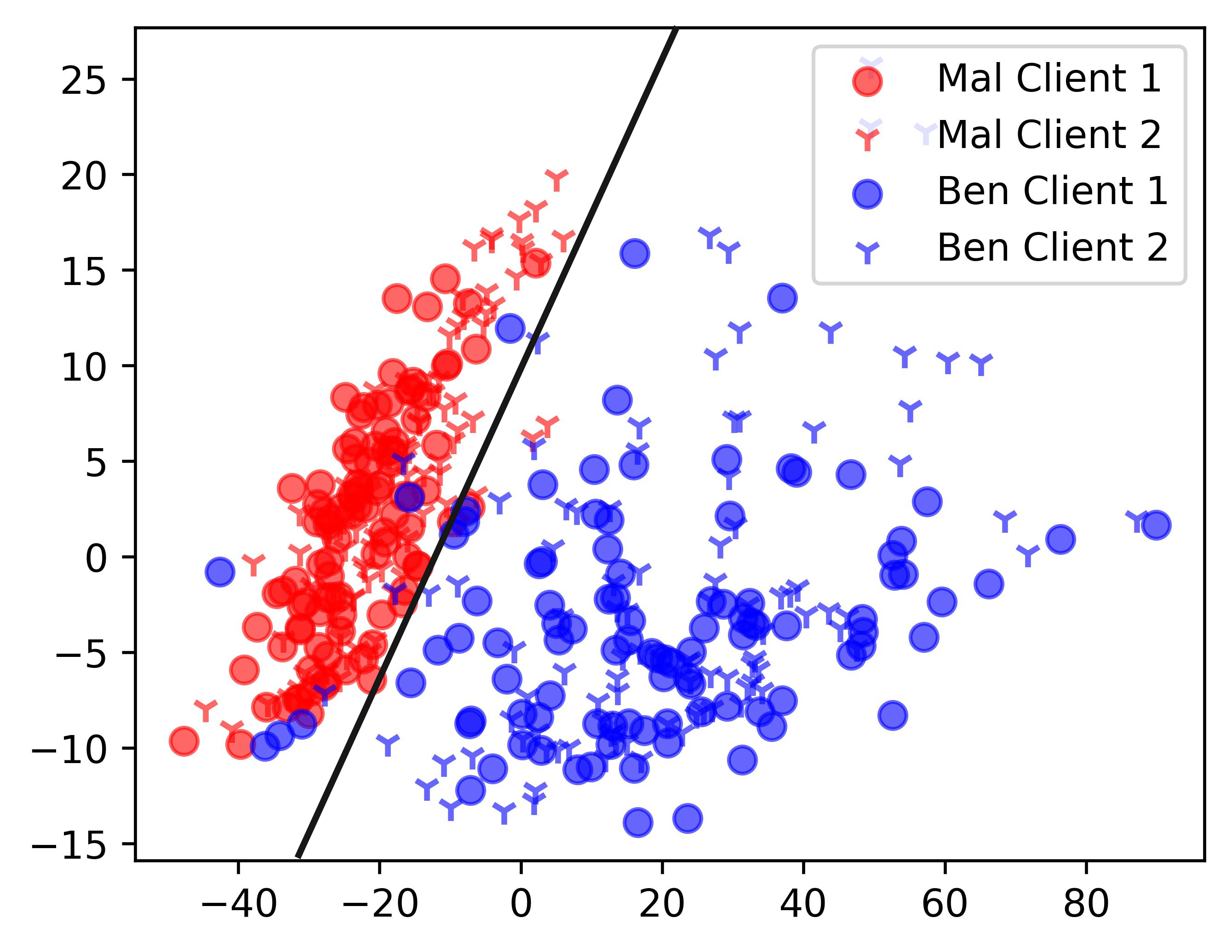} }}%
    % \enspace
    \subfloat[\centering Clean class]{{\includegraphics[scale=0.375]{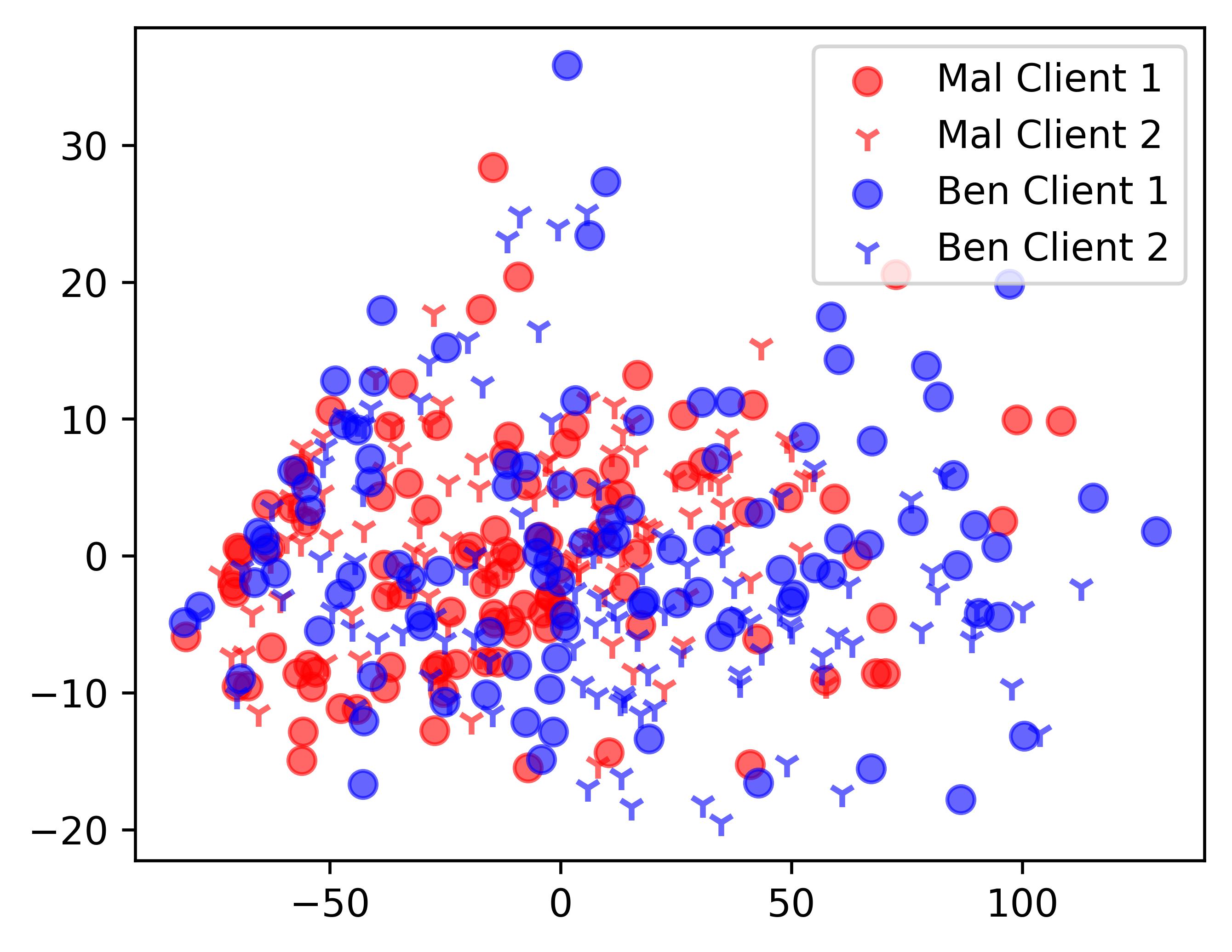} }}%
    \caption{Logits layer representations of intermediate and clean classes under boosted stealthy poisoning attack: data representations are extracted from two  malicious and two benign local models.
    % The malicious and benign representations for the intermediate class are visibly separated, while for the clean class, they are mixed and indistinguishable.  
    }%
    \label{fig:defense}%
    % \vspace{-0.1in}
\end{figure}

\section{Discussion}
In this section, we analyze the computational complexity of BoTPA and discuss strategies to improve its scalability in FL systems with complex models or larger datasets. Finally, we explore its potential adaptation for real-world FL applications.

\textbf{Computational Complexity:} We assess computational complexity from three perspectives: training the surrogate model, selecting intermediate classes, and designing soft labels.
For training the surrogate model, the complexity depends primarily on the size of the model and the size of the data set. In this study, we assume that the attacker utilizes all the data from compromised clients, and so the computational cost is determined by the malicious-client ratio and the number of data samples from these clients. Regarding the architecture of the surrogate model, it is important to note that the performance of the boosted attacks does not necessarily improve with deeper models, as shown in Figure~\ref{fig:transfer}. A simpler and shallower model that achieves performance comparable to the target FL model is preferable to reduce the training time overhead.
For intermediate class selection and soft label design, the computational complexity in this step is affected by the number of intermediate classes, denoted as $\lvert C \rvert$. For a source class with $n_s$ samples and an intermediate class with $n_i$ samples, the complexity is proportional to $O(n_s \cdot n_i)$. Considering all intermediate classes, the total complexity becomes $O(n_s \cdot \sum_{i=1}^{\lvert C \rvert} n_i)$.

\textbf{Scalability:} To improve scalability in scenarios involving complex models and larger datasets, several strategies can be employed. First, instead of utilizing all samples from the source and intermediate classes, a random subset of samples can be selected to reduce computational overhead. From the perspective of model architecture, adopting a simpler surrogate model can significantly decrease the training time. Even if the surrogate model's architecture differs slightly from the target model, it can still effectively capture meaningful class relationships. In addition, early stopping can be implemented to stop the training of the surrogate model once it achieves a predefined accuracy threshold, eliminating the need to wait for full convergence.

\textbf{Attacking real-world FL systems:} 
Designing effective attacks in FL systems for real-world domains, such as healthcare~\cite{healthcare}, presents several significant challenges. One critical issue is dealing with imbalanced datasets, which include class imbalance, biases in the data, and data sparsity. 
In real-world scenarios, implementing BoTPA requires adaptation to account for various types of data biases. When dealing with quantity and label distribution biases, the design of the Amplifier must consider differences in class size to ensure effective manipulation. For feature bias, individual data samples should be utilized in constructing the Amplifier, rather than grouping them by class, to better capture nuanced variations within the dataset. 
 % These factors can significantly affect the success of an attack, as BoTPA depends on data label manipulation to achieve its objectives.

\section{Conclusion \& Future Work}
In this paper, we propose BoTPA, a generalized framework designed to enhance the attack success rates (ASRs) of various targeted poisoning attacks, including both data and model poisoning attacks, by only requiring modifications to existing data labels. The core innovation of BoTPA lies in leveraging intermediate classes (i.e., classes other than the source and target classes) to increase the probability of specific misclassifications.
Our study shows that BoTPA effectively breaches the boundary between source and target classes by manipulating feature distributions in the latent space. We conduct comprehensive evaluations across three datasets, assessing the average boosting capability across all possible source-target combinations in data poisoning attacks. Furthermore, in the context of model poisoning attacks, BoTPA significantly enhances the poisoning impact while maintaining stealthiness, even in the presence of well-established defense mechanisms.

In future research, it is essential to gain a deeper understanding of the factors influencing the differences in ASRs among various source-target combinations. Moreover, developing a robust density-based defense mechanism is crucial. Such a defense could 
effectively identify subtle malicious modifications to data, thereby enhancing the security and resilience of federated learning systems.

\section*{Acknowledgments}
We would like to thank the anonymous reviewers
for their constructive comments, which help improve the quality of this article. This work was supported in part by the
US Office of Naval Research under Grant N00014-23-1-2158 and the National Science Foundation under
Grant CNS-2317829.

%{\appendices
%\section*{Proof of the First Zonklar Equation}
%Appendix one text goes here.
% You can choose not to have a title for an appendix if you want by leaving the argument blank
%\section*{Proof of the Second Zonklar Equation}
%Appendix two text goes here.}

% \clearpage
% \newpage
\bibliographystyle{IEEEtran}
\bibliography{main}

% \newpage

\section{Biography Section}

\vspace{-0.4in}

\begin{IEEEbiography}
[{\includegraphics[width=1in,height=1.25in,clip, trim=0mm 8mm 0mm 7mm, keepaspectratio]{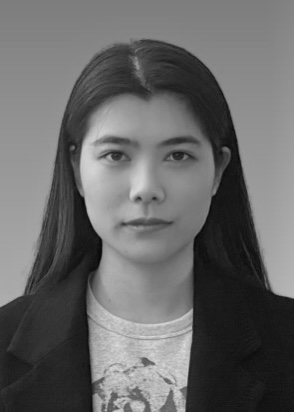}}]%
%trim=left bottom right top
{Shihua Sun} received a B.Sc. degree in Electrical Science and Technology from the University of Electronic Science and Technology of China in 2018 and an M.Sc. degree in Electrical and Computer Engineering from the University of California, San Diego, USA, in 2020. She is currently pursuing a Ph.D. degree with the Department of Electrical and Computer Engineering at Virginia Tech, USA. Her research interests include adversarial machine learning and network security.
\end{IEEEbiography}

\vspace{-0.55in}

\begin{IEEEbiography}
[{\includegraphics[width=1in,height=1.25in,clip, trim=30mm 25mm 30mm 20mm, keepaspectratio]{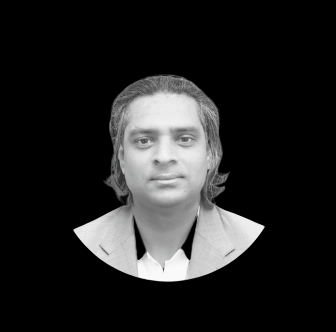}}]%
%trim=left bottom right top
{Shridatt Sugrim} received a Ph.D. degree in Electrical and Computer Engineering from Rutgers University in 2020. He is currently a senior research scientist at Kryptowire Labs. His research interests cover cyber security, machine learning, data networks, probabilistic models and model evaluation.
\end{IEEEbiography}

\vspace{-0.45in}
\begin{IEEEbiography}
[{\includegraphics[width=1in,height=1.25in,clip,keepaspectratio]{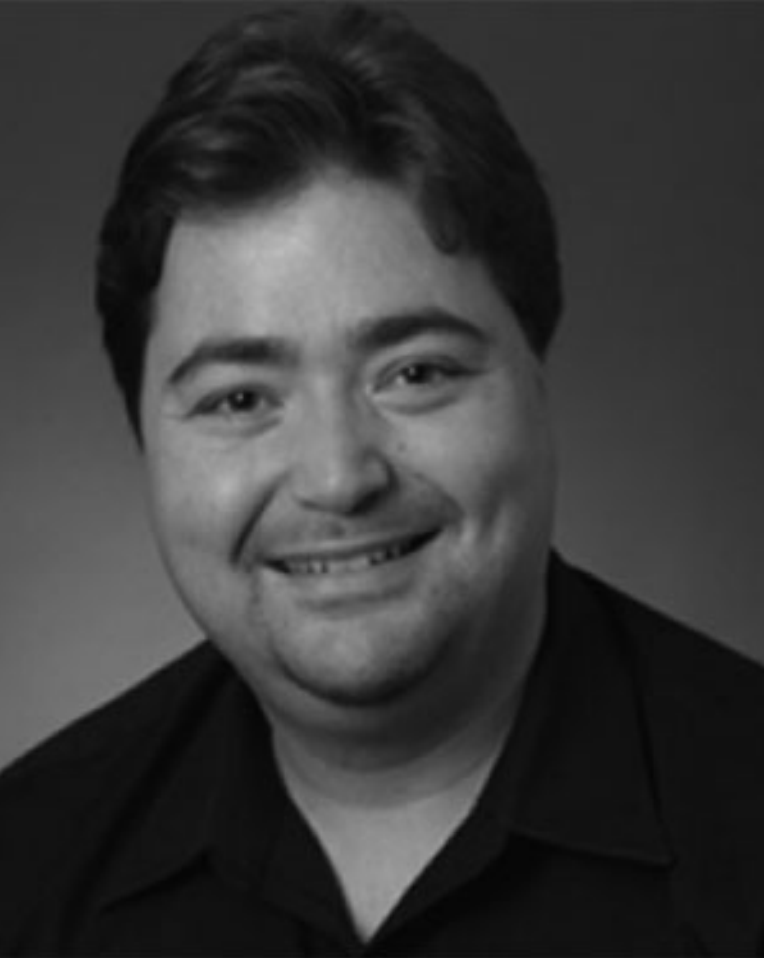}}]%
{Angelos Stavrou}
received a Ph.D. degree in computer science from Columbia University in 2007. He is currently a professor in the Department of Electrical and Computer Engineering at Virginia Tech, USA. His research interests include large systems security \& survivability, intrusion detection systems, privacy \& anonymity, and security for MANETs and mobile devices.
\end{IEEEbiography}

% \vspace{11pt}
\vspace{-0.45in}
\begin{IEEEbiography}
[{\includegraphics[width=1in,height=1.25in,clip,keepaspectratio]{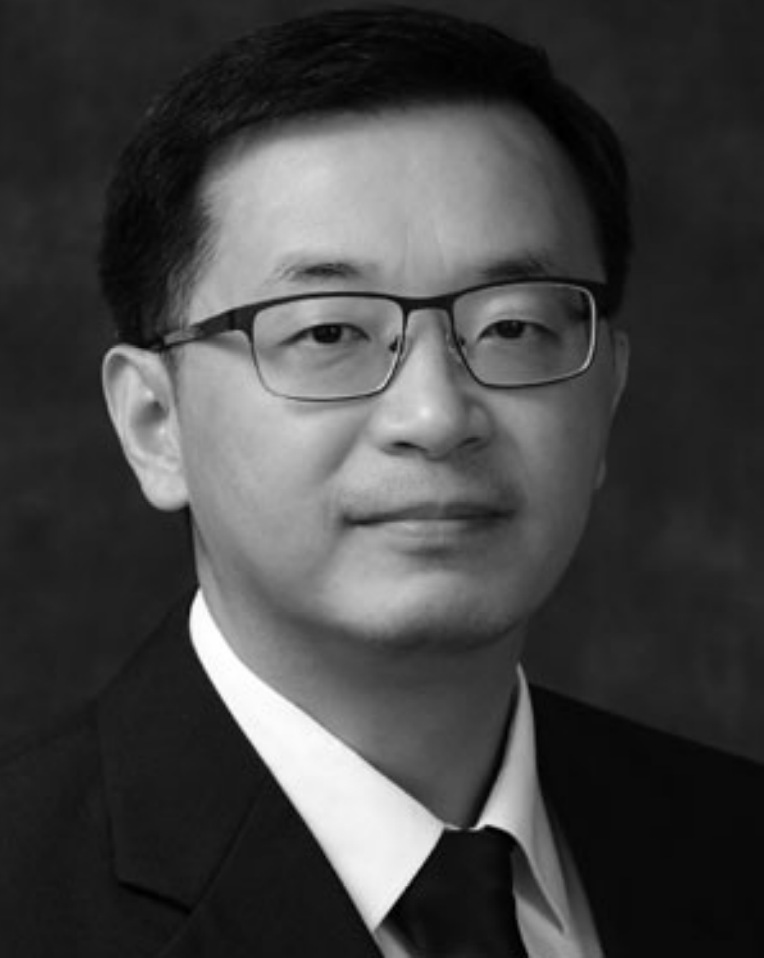}}]%
{Haining Wang}
(Fellow, IEEE) received a Ph.D. degree in computer science and engineering from the University of Michigan, Ann Arbor, MI, USA, in 2003. He is currently a professor in the Department of Electrical and Computer Engineering at Virginia Tech, USA. His research interests include security, networking systems, cloud computing, and cyber-physical systems.
\end{IEEEbiography}

\vfill

\end{document}